\newcommand{\Prob}[1]{\Pr\left[#1\right]}
\newcommand{\getsr}{{\:\stackrel{{\scriptscriptstyle \hspace{0.2em}\$}} {\leftarrow}\:}}
\newcommand{\getsp}{{\:\stackrel{{\scriptscriptstyle \hspace{0.2em}\mathcal{D}}} {\leftarrow}\:}}
\newcommand{\nosemic}{\renewcommand{\@endalgocfline}{\relax}}
\newcommand{\dosemic}{\renewcommand{\@endalgocfline}{\algocf@endline}}
\let\oldnl\nl
\newcommand{\nonl}{\renewcommand{\nl}{\let\nl\oldnl}}
\newcommand{\Adv}{{\mathbf{\bf Adv}}}
\newcommand{\experimentv}[1]{\underline{{#1}}\smallskip}
\newcommand{\bits}{{\{0,1\}}}
\newcommand{\datt}{{d_{\textup{attn}}}}
\newcommand{\cmt}[1]{{\color{blue} \footnotesize \# #1}}
\newtheorem{theorem}{Theorem} 
\newtheorem*{theorem*}{Theorem}
\newtheorem{lemma}{Lemma}
\newtheorem*{lemma*}{Lemma}
\newtheorem{definition}{Definition}
\newtheorem{remark}{Remark}
\begin{document}

%

%

\twocolumn[

\aistatstitle{Analysis of Privacy Leakage in Federated Large Language Models}

\aistatsauthor{Minh N. Vu \And Truc Nguyen \And  Tre’ R. Jeter \And My T. Thai}

\aistatsaddress{ University of Florida \And University of Florida \And University of Florida \And University of Florida  } ]

\begin{abstract}
  With the rapid adoption of Federated Learning (FL) as the training and tuning protocol for applications utilizing  Large Language Models (LLMs), recent research highlights the need for significant modifications to FL to accommodate the large-scale of LLMs. While substantial adjustments to the protocol have been introduced as a response, comprehensive privacy analysis for the adapted FL protocol is currently lacking.

To address this gap, our work delves into an extensive examination of the privacy analysis of FL when used for training LLMs, both from theoretical and practical perspectives. In particular, we design two active membership inference attacks with guaranteed theoretical success rates to assess the privacy leakages of various adapted FL configurations. Our theoretical findings are translated into practical attacks, revealing substantial privacy vulnerabilities in popular LLMs, including BERT, RoBERTa, DistilBERT, and OpenAI's GPTs, across multiple real-world language datasets. Additionally, we conduct thorough experiments to evaluate the privacy leakage of these models when data is protected by state-of-the-art differential privacy (DP) mechanisms. 
\end{abstract}

\section{INTRODUCTION}

Recent years have observed the exceptional capabilities of Large Language Models (LLMs)~\citep{taylor2022galactica} in many complex real-world applications, especially those involving AI-generated conversations of ChatGPT~\citep{chatgpt} and Google Bard~\citep{google_bard_2023}. This has spurred significant research efforts in training and utilizing LLMs in various important domains across multiple industries~\citep{huang2023finbert, wu2023bloomberggpt, PMID:37438534,liu2023using}. Nevertheless, the success of LLMs comes at the cost of massive amounts of training data as well as computational resources. In response, many recent research~\citep{ adapter_tuning, hu2022lora, li-liang-2021-prefix, ben-zaken-etal-2022-bitfit} independently propose the usage of Federated Learning (FL)~\citep{McMahan2016CommunicationEfficientLO, FLBook} to resolve those challenges as it allows the utilization of private data and distributed resources. 

Due to the massive size of LLMs, directly training or fine-tuning them for down-stream tasks on FL would cause substantial communication overhead and place heavy burdens on the storage and computational resources of participating devices~\citep{hu2022lora, zhang-etal-2023-fedpetuning, chen2023federated}. A natural solution for the issues, known as parameter-efficient training and tuning (PET)~\citep{zhang-etal-2023-fedpetuning}, is to update only a small number of the parameters while freezing the rest (Fig.~\ref{fig:petuning}). The PET methods either inject some additional trainable parameters~\citep{li-liang-2021-prefix},  introduce some extra layers~\citep{adapter_tuning, hu2022lora}, or update only some portions of the original LLMs~\citep{ben-zaken-etal-2022-bitfit}. These studies have shown the modified FL can achieve comparable performance to traditional FL; nevertheless, a critical gap remains in terms of privacy analysis for these novel protocols. While the overall reduction in communication messages can potentially lower the attack surface, it remains uncertain whether the modified FL methods offer enhanced security as the introduction of extra layers and parameters may introduce new vulnerabilities.


 \begin{figure}[ht]
     \centering
         \includegraphics[width=0.74\linewidth]{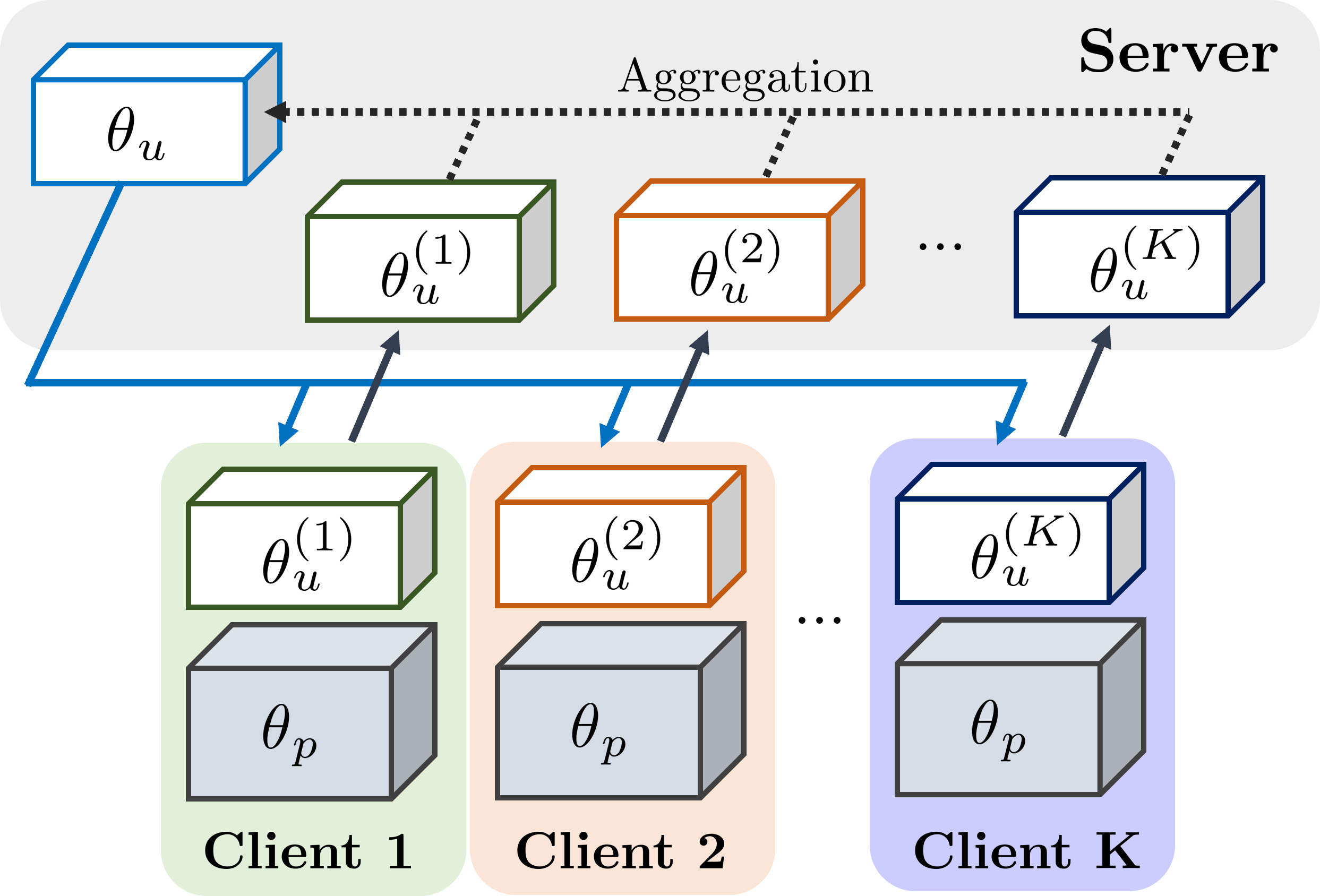}
     \caption{Training/tuning LLMs in FL: The clients typically
exchange a light amount of trainable parameters $\theta_i$ while keeping most parameters, i.e., $\theta_p$, frozen.} 
     \label{fig:petuning}
 \end{figure}

Motivated by that lack of study, this paper investigates the privacy leakage in the adoption of FL to LLMs from both theoretical and practical perspectives. Our study focuses on the active setting, wherein the FL server operates dishonestly by manipulating the trainable weights to compromise privacy. Our findings reveal that clients' data are fundamentally vulnerable to active membership inference (AMI) carried out by a dishonest server. The theme of our study is the construction of low-complexity adversaries with provable high attack success rates. As we will demonstrate in our theoretical results, establishing complexity bounds for these adversaries is imperative as it forms the foundation for rigorous security/privacy statements about the protocol. The challenges are not only in determining the sweet spots between adversaries' complexity and the attacking success rates but also in showing how the theoretical vulnerabilities manifest as practical risks in the context of FLs for LLMs.
For that purpose, our attacks are designed to exploit the trainable fully connected (FC) layers and self-attention layers in FL updates as both are widely adopted to utilize LLMs. 
Our main contributions are summarized as follows:

\begin{itemize}
    \item We prove that the server can exploit the FC layers to perfectly infer membership information of local training data (Theorem~\ref{theorem:ami_none}) in FL with LLMs. 
    \item When the trainable weights belong to a self-attention layer, we introduce a sefl-attention-based AMI attack with a significantly high guarantee success rate and demonstrate a similar privacy risk as in the case of FC layers (Theorem~\ref{theorem:ami_attn}). 
    \item Practical privacy risks of utilizing LLMs in FL to AMI are demonstrated through the implementations of the formulated adversaries. Our experiments are conducted on five state-of-the-art LLMs, including BERT-based~\citep{BERT2019, Liu2019RoBERTaAR, sanh2019distilbert} and GPT models~\citep{gpt2, gpt3} on four different datasets. We heuristically assess the privacy risks for both unprotected and Differential Privacy protected data.
    
    
\end{itemize}

\textbf{Organization.} Sect.~\ref{sect:prelim} discusses the background, the related works, and the notations used in this manuscript. Sect.~\ref{sect:threat_models} describes the AMI threat models. Sects.~\ref{sect:ami_mlp} provide the descriptions of our attacks and their theoretical analysis. Sect.~\ref{sect:experiments} reports our experimental results and Sect.~\ref{sect:conclude} concludes this paper. 

\section{BACKGROUND, RELATED WORKS, AND NOTATIONS} \label{sect:prelim}
This section provides the background, related works, and notations that we use in this work.

\textbf{Federated Learning (FL).} FL~\citep{FLBook} is a collaborative learning framework that allows model updates across decentralized devices while keeping the data localized. The training is typically orchestrated by a central server. 
At the beginning, the server initializes the model's parameters. For each training iteration, a subset of clients is selected to participate. Each of them computes the gradients of trainable parameters on its local data. The gradients are then aggregated among the selected clients. The model's parameters are then updated based on the aggregated gradients. The training continues until convergence.

 \textbf{Parameter-Efficient Training and Tuning.} There has been a notable increase in research on PET to avoid full model updates in FL. Generally, these methods focus on updating lightweight trainable parameters while keeping the rest frozen. For instance, the adapter approach~\citep{adapter_tuning,pfeiffer-etal-2020-adapterhub} inserts two adapter layers to each Transformer block~\citep{Vaswani2017}. The computation can be described formally as $h \leftarrow W_u \sigma (W_d h)$, where $\sigma$ is the nonlinear activation function. Reparameterization-based methods such as LoRA~\citep{hu2022lora} aim to optimize the low-rank decomposition for weight update matrices. On the other hand, BitFit~\citep{ben-zaken-etal-2022-bitfit} empirically demonstrates that only updating the bias terms can still lead to competitive performance. Another common approach is prompt-tuning~\citep{li-liang-2021-prefix}  which attaches trainable vectors, namely prompt, to the model's input. 
 

 \textbf{Attention mechanism in FL}. The Attention mechanism is a machine learning technique that allows a model to focus on specific parts of the input. It has been widely used in machine translation, video processing, speech recognition, and many other applications. Some popular models using attention mechanisms are Transformer~\citep{Vaswani2017}, BERT~\citep{BERT2019}, RoBERTa~\citep{Liu2019RoBERTaAR}, and GPTs~\citep{gpt2,gpt3}. With its popularity, it is increasingly common to find attention mechanisms in models trained in FL settings, e.g. models of Google~\citep{fl_att_google, fl_att_google_2, fl_att_google_3}, Amazon~\citep{fl_att_amazon}, and many others~\citep{stremmel2021pretraining, Chen2020FederatedML, bui2019federated}. 

\textbf{FL with Differential Privacy.} Differential Privacy (DP)~\citep{dwork2006calibrating, erlingsson2014rappor} has been recognized as a solution to 
mitigate the privacy risks of gradient sharing in FL. The principle of DP is based on \textit{randomized response}~\citep{warner1965randomized}, initially introduced to maintain the confidentiality of survey respondents. The definition of $\varepsilon$-DP is:

\begin{definition}{$\varepsilon$-DP.} A randomized algorithm $\mathcal{M}$ fulfills $\varepsilon$-DP, if for all subset $ \mathcal{S} \subseteq \textup{Range}(\mathcal{M})$  and
for any adjacent datasets $D$ and $D'$, we have:
$Pr[\mathcal{M}(D) \in \mathcal{S}] \leq e^{\varepsilon} Pr[\mathcal{M}(D') \in \mathcal{S}]$,
where $\varepsilon$ is a privacy budget.
\label{Different Privacy} 
\end{definition}

The privacy budget $\varepsilon$ controls the privacy level: a smaller $\varepsilon$ enforces a stronger privacy guarantee; however, it reduces data utility as the distortion is larger.


\textbf{Related works.} The first AMI attack by a dishonest server in FL was recently introduced by~\cite{NasrSH19}. The attack relies on multiple model updates for inference. Later, the work~\citep{nguyen2023active} introduces a stronger and stealthier attack requiring only one FL iteration; however, a separate neural network trained on the dataset is needed. Both attacks have non-trivial time complexity and provide no theoretical guarantees. Furthermore, neither of these works is tailored for FL with LLMs, and their efficacy in this context has not been validated.

During our research, we noticed related privacy concerns regarding FL with LLMs. The work of~\cite{Wang2023CanPL} and~\cite{yu2023federated} explore differentially private FL with LLMs, primarily focusing on assessing the model's performance while assuming privacy budgets. In contrast, our goal is to analyze new attack vectors in modified FL for LLMs through novel adversarial inference schemes.

Another line of related work focuses on inference attacks, where the attackers aim to deduce the private features of clients in FL~\citep{Hitaj2017DeepMU, NEURIPS2019_60a6c400, 8953573, featureinferenceattack2021_cafe}. While inference attacks can be considered as more potent than membership inferences since they can recover the entire input, their effectiveness heavily relies on the training model and method-specific optimizations. These dependencies hinder the establishment of theoretical guarantees of inference attacks. In contrast, membership inferences enable us to circumvent these dependencies, a crucial aspect for establishing Theorems~\ref{theorem:ami_none} and~\ref{theorem:ami_attn}, which provide formal statements regarding the vulnerability of Federated LLMs.



\textbf{Notations.} We consider the private data $D$ in LLMs consists of 2-dimensional arrays, and write $D = \{X_i\}_{i=1}^n$ for $X_i \in \mathcal{X}$ where $\mathcal{X} \subseteq \mathbb{R}^{d_X\times l_X}$. We use $\mathcal{D}$ to denote the data distribution. Each column of a 2-dimensional array $X$, denoted by $x_j \in \mathbb{R}^{d_X}$, is referred as a \textit{token}. We also denote $M$ as the largest $L_2$ norm of the tokens, i.e., $M = \max_{X \in D} \max_{x_j \in X} \| x_j\|$.

Regarding the self-attention mechanisms, for an input $X \in \mathbb{R}^{d_X \times l_X}$, the layer's output $Z^h \in \mathbb{R}^{d_{\textup{hid}} \times l_X}$ of the attention head $h$ is given by:
\begin{align}
    Z^h &=  W_V^h X \textup{softmax} \left( \nicefrac{1}{\sqrt{d_{\textup{attn}}}} \ X^\top {W_K^{h \top}} W_Q^h X \right) \label{eq:attn_weight}
\end{align}
where $W_Q^h \in \mathbb{R}^{\datt \times d_X}, W_K^h \in \mathbb{R}^{\datt \times d_X}$ and $W_V^h \in \mathbb{R}^{d_{\textup{hid}} \times d_X}$ are the trainable weights of the head $h$. 
The output of the layer after ReLU activation is $ Y =   \textup{ReLU} \left(\sum\nolimits_{h=1}^H W_O^h Z^h + b_O 1^\top \right)$
where $H$ is the number of heads, $W_O^h  \in \mathbb{R}^{d_Y \times d_{\textup{hid}}  }$ are the trainable weights and $b_O \in \mathbb{R}^{d_Y}$ are the trainable biases for the aggregation of the layer's final output. 

\section{THE THREAT MODEL} \label{sect:threat_models}

As the FL server defines the model's architecture and distributes the parameters, it can deviate from the protocol to strengthen the privacy attacks~\citep{boenisch2021curious, nguyen2022blockchain, fowl2021robbing}. We formalize our study on the security of FL with LLMs via a security game denoted by \textit{AMI Security Game} in Subsect.~\ref{subset:ami_security}. In the game, a dishonest server maliciously specifies the model's architecture and modifies its parameters to infer information about the local training data of a client. We then discuss how the AMI security game can capture the security threat of FL with LLMs in Subsect.~\ref{subset:ami_game_w_LLMs}.



\subsection{AMI Security Games} \label{subset:ami_security}

\begin{figure}[ht]
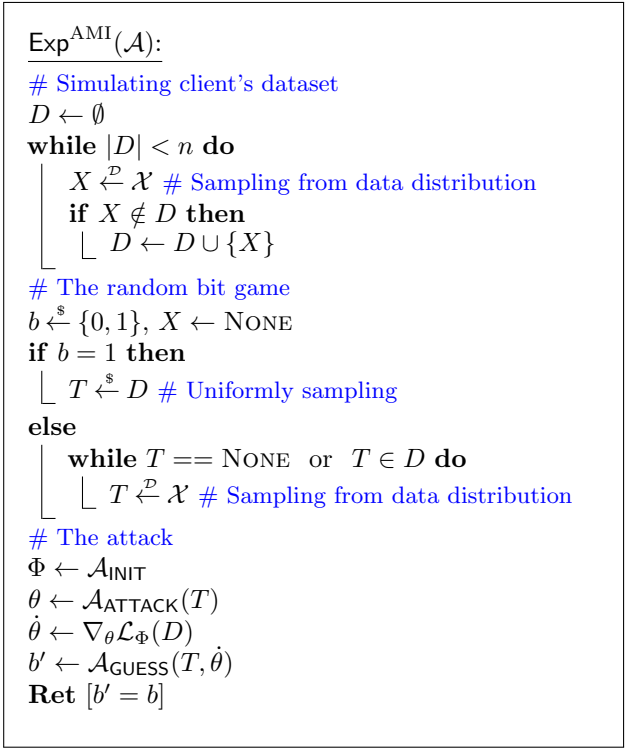

\begin{framed}
\begin{flushleft}
\experimentv{$\mathsf{Exp}^{\textup{AMI}}{(\mathcal{A})}$:}  \\
\cmt{Simulating client's dataset} \\ 
$D \leftarrow \emptyset$  \\ 
\While{$|D| < n$} 
{$X \getsp \mathcal{X} $ \cmt{Sampling from data distribution} \\
\If{$X \notin D$}{$D \leftarrow D \cup \{ X \}$ }}
\cmt{The random bit game}\\
$b \getsr \bits$, $X \leftarrow \textsc{None}$ \\
\If{$b=1$} {
$T \getsr D$ \cmt{Uniformly sampling}  \\
}
\Else{
\While{$T == \textsc{None}$ { \textup{or} } $T \in D$}
{
$T \getsp \mathcal{X}$ \cmt{Sampling from data distribution} \\
}
}
\cmt{The attack}\\
$\Phi \gets \mathcal{A}_{\mathsf{INIT}}$ \\
$\theta \gets \mathcal{A}_{\mathsf{ATTACK}}(T)$ \\
$ \dot{\theta} \gets \nabla_\theta \mathcal{L}_{\Phi}(D)$ \\
$b' \gets \mathcal{A}_{\mathsf{GUESS}}(T, \dot{\theta})$ \\
\textbf{Ret} $[b'=b]$ 
\end{flushleft}
\end{framed}
\caption{The AMI Threat Model as a Security Game.}
\label{fig:AMI}
\end{figure}

We formalize the AMI threat models as in the standard settings of existing works~\citep{shokri2017membership, carlini2022membership,yeom2018privacy} into the security games $\mathsf{Exp}^{\textup{AMI}}(\mathcal{A})$ with description in Fig.~\ref{fig:AMI}. The adversarial server $\mathcal{A}$ in the security games consists of 3 components $\mathcal{A}_{\mathsf{INIT}}$, $\mathcal{A}_{\mathsf{ATTACK}}$ and $\mathcal{A}_{\mathsf{GUESS}}$. First, a randomly generated bit $b$ is used to determine whether the client's data $D$ contains a target sample $T$. If the protocol allows the server to decide the training architecture, the first step of the server $\mathcal{A}_{\mathsf{INIT}}$ decides a model $\Phi$ for the training. If that is not the case, $\mathcal{A}_{\mathsf{INIT}}$ simply collects information on the model from the protocol. Then, $\mathcal{A}_{\mathsf{ATTACK}}$ crafts the model's parameters $\theta$ based on the target $T$ and the trained architecture. Upon receiving $\Phi$ and $\theta$, the client computes the gradients $\dot{\theta} = \nabla_\theta \mathcal{L}_{\Phi}(D)$ and sends them to the server, where $\mathcal{L}$ denotes a loss function for training. With $\dot{\theta}$, $\mathcal{A}_{\mathsf{GUESS}}$ guesses the value of $b$. Correctly inferring $b$ is equivalent to determining whether $T$  is in the local data $D$. The advantage of the adversarial server $\mathcal{A}$ in the security game is given by:
 \begin{align*}
      &\Adv^{\textup{AMI}}(\mathcal{A}) = 2 \Pr[\mathsf{Exp}^{\textup{AMI}}(\mathcal{A}) = 1] - 1 \\
      &= \Pr[b'=1|b=1] + \Pr[b'=0|b=0] - 1 
\end{align*}
 where $\Pr[b'=1|b=1]$ and $\Pr[b'=0|b=0]$ are the True Positive Rate and the True Negative Rate of the adversary, respectively. The existence of an adversary with a high advantage implies a high privacy risk/vulnerability of the protocol described in the security game. 

\subsection{AMI Security Games for Federated LLMs} \label{subset:ami_game_w_LLMs}


To ensure $\mathsf{Exp}^{\textup{AMI}}(\mathcal{A})$ truly captures practical threats in Federated LLM, additional constraints are needed. For example, to describe PET, $\mathcal{A}_{\mathsf{ATTACK}}$ should exclusively target trainable weights $\theta_u$ at specific model locations. We now describe those conditions and our proposed attacks will align accordingly.
 \begin{figure}[ht]
     \centering
         \includegraphics[width=0.85\linewidth]{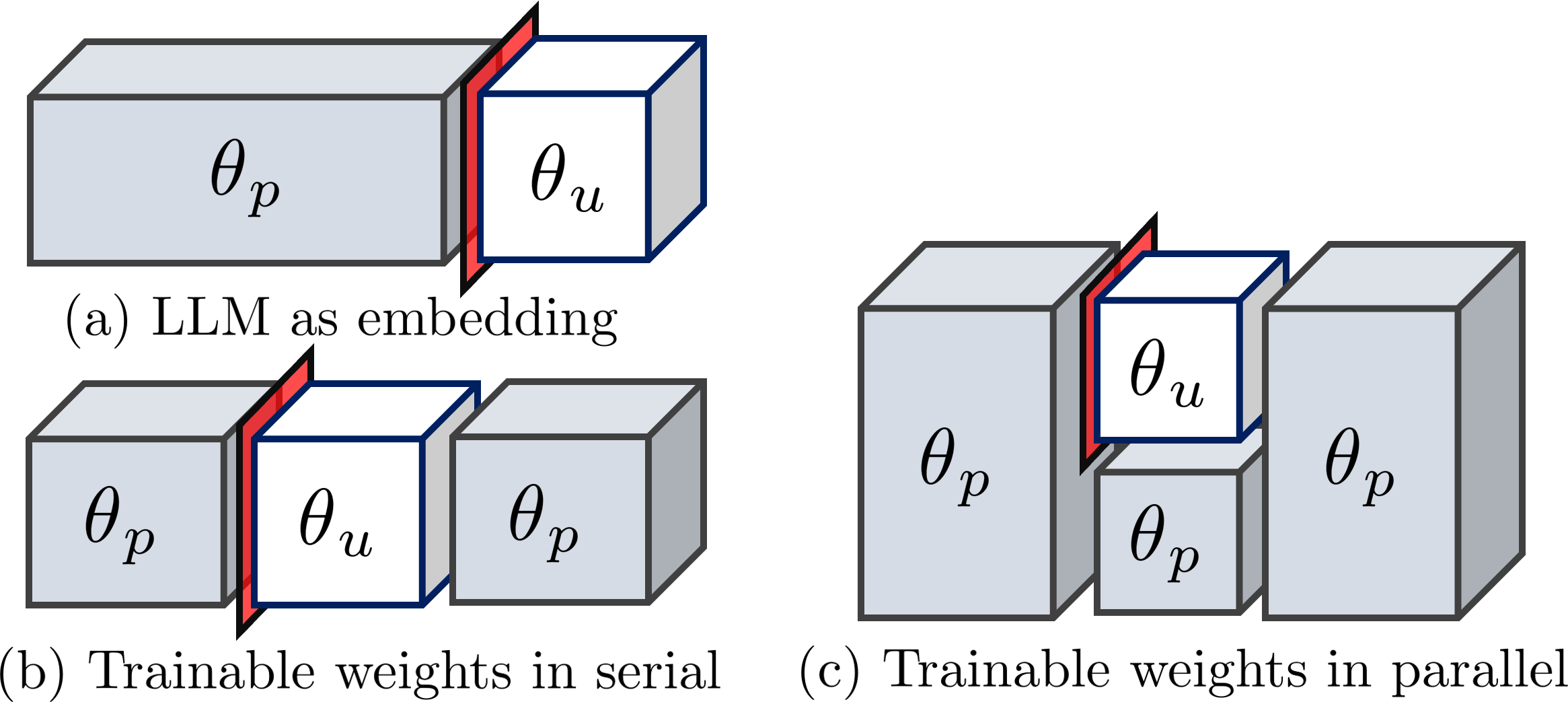}
     \caption{Different scenarios in training/fine-tuning LLMs in FL. The red squares show the privacy leakage surfaces in the threat model. The white and grey boxes indicate the trainable and frozen weights, respectively.} 
     \label{fig:petuning_2}
 \end{figure}

Fig.\ref{fig:petuning_2} illustrates typical FL update scenarios with LLMs and their associated privacy leaking surfaces. The most common configuration (a) is when the LLM or some of its layers are used as embedding modules. In the second and third scenarios, additional modules with trainable weights are introduced sequentially (b) or in parallel (c). They capture the PET strategies involving adapters~\citep{adapter_tuning, pfeiffer-etal-2020-adapterhub} and reparameterization tricks~\citep{hu2022lora}. 

To make $\mathsf{Exp}^{\textup{AMI}}(\mathcal{A})$ describe the threats in Federated LLMs, the type of layers and weights in the trainable modules need to be specified. The assumption is that their inputs are hidden representations of the input data at some fixed locations of the language models. The most common layers are the fully connected (FC) layers and the attention mechanism~\citep{Vaswani2017}. Consequently, our inference attacks are designed for those layers. 

It is noteworthy to point out that, the actual pattern that the adversaries operate on is the embedded version of $T$, i.e., $\Phi_{\theta_p}(T)$. As the pre-trained parameters $\theta_p$ are public, and under a mild assumption that different target $T$ result in different embedding $\Phi_{\theta_p}(T)$, the problems of inferring $T$ and  $\Phi_{\theta_p}(T)$ are equivalent~\footnote{For LLMs, the assumption means different input texts result in different embedding, which is quite reasonable.}. Therefore, in the subsequent discussions, we treat the embeddings as the user's data for ease of notation.



\section{ACTIVE MEMBERSHIP INFERENCE ATTACKS} \label{sect:ami_mlp}
This section presents our membership attacks and their theoretical guarantees in inferring data in FL on LLMs. The first attack, called FC-based adversary $\mathcal{A}_{\mathsf{FC}}$,  is designed for scenarios where the first two trainable layers (See Fig.~\ref{fig:petuning_2}) are FC layers. The second attack, Attention-based adversary $\mathcal{A}_{\mathsf{Attn}}$, is tailored for cases when the first trainable layer is self-attention. The primary goal of both attacks is to create a neuron in those layers such that {it is activated if and only if the target pattern is fed to the model}. Consequently, the gradients of the weights computing that neuron are non-zero if and only if the target pattern is in the private training batch.

\begin{figure*}[h]
     \centering
         \includegraphics[height=0.12\linewidth]{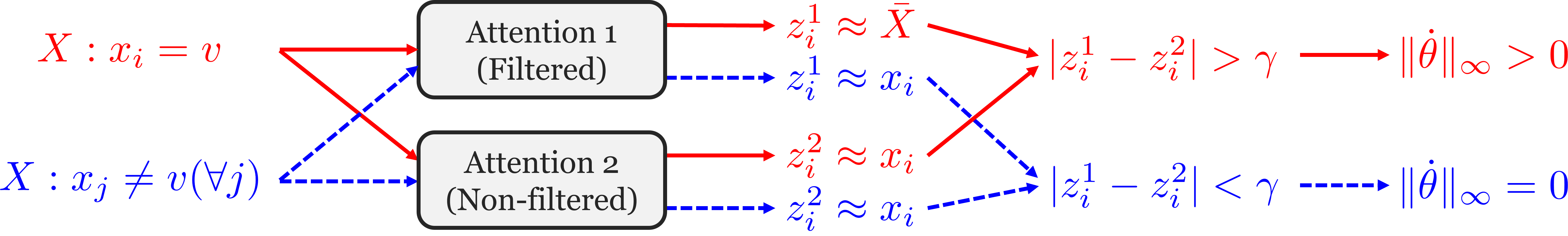}
     \caption{The $\mathcal{A}_{\mathsf{Attn}}$ adversary exploiting self-attention mechanism for membership inference in FL: If the target pattern $v= x_i$ is in the data, the output $z^1_i$ of the filtered head approximates the token's average $\Bar{X}$ instead of approximating $x_i$. This creates non-zero gradients for weights computing the difference between two heads.}
     \label{fig:attack_principle}
 \end{figure*}

    \subsection{FC-based adversary} \label{subsect:fc_adv}
    
    The FC-based adversary $\mathcal{A}_{\mathsf{FC}}$ operates as follows. First, if the protocol permits, the initialization $\mathcal{A}_{\mathsf{FC-INIT}}$ specifies the model $\Phi$ in the FL training to use FC as its first two layers. For a target pattern $T$ with a dimension of $d_T$,  the weights $W_1$ and biases $b_1$ of the first layer are set to dimensions $2d_T \times d_T$ and $2d_T$, respectively. If the protocol requires a larger dimension, extra parameters can be ignored. Conversely, if only smaller dimensions are allowed, the attacker uses a substring of $T$ as the detected pattern and proceeds from there. Since $\mathcal{A}_{\mathsf{FC}}$ uses only one output neuron at the second FC layer, the attack only needs to set one row and one entry of its weights and biases, denoted by $W_2[1,:]$ and $b_2[1]$. Particularly, the weights and biases of the two FC layers are set by $\mathcal{A}_{\mathsf{FC-ATTACK}}$ as:
     \begin{align}
         W_1 \leftarrow \begin{bmatrix}
        I_{d_T} \\ 
        - I_{d_T}
        \end{bmatrix}, \quad  b_1 \leftarrow \begin{bmatrix}
        - T \\ 
        T 
        \end{bmatrix} \nonumber \\
        W_2[1,:] \leftarrow -1_{d_T}^\top, \quad   b_2[1] \leftarrow \tau \label{eq:tau_first}
     \end{align}
     where $I_{d_T}$ is the identity matrix and $1_{d_T}$ is the one-vector of size $d_T$. The parameter $\tau$ controls the allowable distance between an input $X$ and the target $T$, which can be obtained from the distribution statistics. In the guessing phase, $\mathcal{A}_{\mathsf{FC-GUESS}}$ returns $1$ if the gradient of $b_2[1]$ is non-zero and $0$ otherwise. Appx.~\ref{appx:ami_adv} provides a more detailed description of this attack.

     \textbf{Attack strategy.} Upon receiving an input $X$, the two FC layers compute $z_0 :=\max \left \{ b_2[1]  - \| X - T\|_{L_1}, 0 \right \}$. If there is an $X = T$ in the data, $z_0$ will be activated and the gradient of the bias $b_2[1]$ is non-zero. On the other hand, for $b_2[1]=\tau > 0$ small enough, $z_0 = 0$ for all $X \neq T$. Thus, the gradient of the bias $b_2[1]$ is zero when $T \notin D$. Therefore, the gradient of $b_2[1]$ indicates the presence of $T$ in the local data.

    \begin{remark}\label{remark:twoFC}\textbf{The dimension of the target $T$.}
    In practice, it is uncommon to feed the embedding of the entire input to FC layers. Instead, it is more common to forward each token individually. This means the target $T$ is the embedding of a token with a dimension of $d_X$. We might expect that this could hinder adversarial inference as a smaller portion of the input is exploited. However, as LLMs create strong connections among tokens, each token can embed the signature of the entire sentence. While this characteristic benefits the model's performance, it also enables the inference of the input from the token level. Our experiments (Sect.~\ref{sect:experiments}) will consider both sentence embedding and token embedding and illustrate the above claim. 
\end{remark}

     \textbf{Attack advantage.} The attack strategy implies that the adversary wins the security game $\mathsf{Exp}^{\textup{AMI}}$ with probability 1. We formalize that claim in  Lemma~\ref{lemma:ami_none}:
     \begin{lemma} \label{lemma:ami_none}
    The advantage of the adversary $\mathcal{A}_{\mathsf{FC}}$ in the security game $\mathsf{Exp}^{\textup{AMI}}$ is 1, i.e., $\Adv^{\textup{AMI}}(\mathcal{A}_{\mathsf{FC}})  = 1$. (Proof in Appx.~\ref{appx:ami_mlp_noldp})
    \end{lemma}
    Since  $\mathcal{A}_{\mathsf{FC}}$ can be constructed in $\mathcal{O} (d_T^2)$, the Lemma gives us the following theoretical result on the vulnerability of unprotected data in FL against AMI:
    \begin{theorem} \label{theorem:ami_none}
    There exists an AMI adversary $\mathcal{A}$ exploiting 2 FC layers with time complexity $\mathcal{O} (d_T^2)$  and advantage $\Adv^{\textup{AMI}}(\mathcal{A})  = 1$ in the game $\mathsf{Exp}^{\textup{AMI}}$.
    \end{theorem}
     
The implication of Theorem~\ref{theorem:ami_none} is that unprotected private data of clients in FL is exposed to very high privacy risk, which was also observed by~\cite{NasrSH19, nguyen2023active}. However, the adversaries in those works involve multiple updates of either the FL model or other neural networks; thus, theoretical results on the trade-off between their success rates and complexities are not yet available. Therefore, it is unknown whether the vulnerability of Federated LLMs can be rigorously established from those attacks.

\begin{remark}\label{remark:FC_assumption}\textbf{Attack assumptions.}
    {The FC attack $\mathcal{A}_{\mathsf{FC}}$ does not need any distributional information to work on unprotected data. In fact, the attacker just needs to specify $\tau$ (\ref{eq:tau_first}) small enough such that $\tau < \| X_1 - X_2 \|_{L_1}$ for any $X_1 \neq X_2$ in the model's dictionary. This is shown in the proof of Lemma~\ref{lemma:ami_none} (Appx.~\ref{appx:ami_mlp_noldp}). Since the dictionary or the tokenizer is public, selecting $\tau$ does not require any additional information.}
\end{remark}


\subsection{Attention-based adversary} \label{subsect:attn_adv}

Our proposed $\mathcal{A}_{\mathsf{Attn}}$ exploits the memorization capability of the self-attention, which was indirectly studied by~\cite{ramsauer2021hopfield}. That work shows the attention is equivalent to the proposed \textit{Hopfield} layer, whose main purpose is to directly integrate memorization into the layer. We adopt that viewpoint and introduce a specific configuration of the attention to make it memorize the local training data while filtering out the target of inference. 


 Since $\mathcal{A}_{\mathsf{Attn}}$ operates at a token level, the target of inference is the token resulting from embedding the target $T$, denoted by $v \in \mathbb{R}^{d_X}$. The intuition of $\mathcal{A}_{\mathsf{Attn}}$ is shown in Fig.~\ref{fig:attack_principle}: by setting an attention head to memorize the input batch and filter out the target token, $\mathcal{A}_{\mathsf{Attn}}$ introduces a gap between that head's output and the output of a non-filtered head. The gap is then exploited to reveal the victim's data. The following describes the components, strategy, and advantage of this attack.

\textit{Initialization $\mathcal{A}_{\mathsf{Attn-INIT}}$:} The attack uses 4 attention heads, i.e., $H$ to 4. The layer dimensions are $d_{\textup{attn}} = d_X -1$, $d_{\textup{hid}} = d_X$ and $d_Y = 2 d_X$. Any configurations with more parameters can adopt this attack since the extra parameters can simply be ignored. 
    \SetKwInput{KwInput}{Hyper-parameters}
    \IncMargin{1.3em}
\begin{algorithm}[ht]
    \SetAlgoLined
    \KwInput{$\beta, \gamma \in \mathbb{R}^+$} 
     Randomly initialize $W_Q^h, W_K^h, W_V^h, W_O $ and $b_O$ for all head $h \in \{1,2,3,4 \}$
    
    Randomly initialize a matrix $W \in \mathbb{R}^{d_X \times d_X}$ 
    \label{line:init}
    
    $W[:,1] \leftarrow v$ \cmt{Set the first column of $W$ to $v$} \label{line:setv}
    
    $Q, R \leftarrow \textup{QR}(W)$ \cmt{QR-factorization $W$}  \label{line:QR}
    
    $W_Q^1 \leftarrow Q[2:d_X]^\top$ \cmt{Embed pattern to $W_Q^1$} \label{line:Q_assign}
    
    $W_K^{1} \leftarrow \beta {{W_Q^{1}}^{\dagger \top}} $ \cmt{Set head $1$ to memorization} \label{line:pinv1}
    
    $W_K^{2} \leftarrow \beta {{W_Q^{2}}^{\dagger \top}} $ \cmt{Set head $2$ to memorization}  \label{line:pinv2}
    
    $W_Q^{3} \leftarrow W_Q^{1}$,  $W_K^{3} \leftarrow W_K^{1}$  \cmt{Copy head 1 to head 3}
    
    $W_Q^{4} \leftarrow W_Q^{2}$,  $W_K^{4} \leftarrow W_K^{2}$  \cmt{Copy head 2 to head 4}
    
    $W_V^{1} \leftarrow I_{d_X}$,
    $W_V^{2} \leftarrow I_{d_X}$, $W_V^{3} \leftarrow I_{d_X}$, $W_V^{4} \leftarrow I_{d_X}$ 
    
    $W_O \leftarrow \begin{bmatrix}
    I_{d_X} & -I_{d_X} & 0_{d_X} & 0_{d_X}\\ 
    0_{d_X} & 0_{d_X} & -I_{d_X} & I_{d_X}
    \end{bmatrix}$ 
    
    ${b_O}_i = - \gamma, \forall i \in \{1,\cdots, d_Y\}$ 
    
    \textbf{Ret} all weights and biases

\caption{$\mathcal{A}_{\mathsf{Attn-ATTACK}}(v)$}
\label{algo:api_attack}
\end{algorithm}

 \textit{Attack $\mathcal{A}_{\mathsf{Attn-ATTACK}}$:} 
    This step (Algo.~\ref{algo:api_attack}) sets the attention weights $W_Q^h, W_K^h, W_V^h, W_O $, and bias $b_O$, where $h$ is the head's index. There are two hyper-parameters, $\beta$ and $\gamma \in \mathbb{R}^+$. While $\beta$ controls how much the heads memorize the input patterns $x_i^h$, $\gamma$ adjusts a cut-off thresholding between $v\in D$ and $v \notin D$ (Fig.~\ref{fig:attack_principle}). Given a target $v$, the first attention head is set such that:
    \begin{align}
        & {W_K^{1 \top}} W_Q^1 \approx \beta  I_{d_X}  \quad \textup{and} \quad W_Q^1  v \approx 0 \label{eq:api_conds}
    \end{align}
   To enforce (\ref{eq:api_conds}), $d_X-1$ vectors orthogonal to $v \in \mathbb{R}^{d_X}$ are assigned to  $W_Q^1$ via QR-factorization (line \ref{line:setv}-\ref{line:Q_assign}). Then, $W_K^1$ is set to the transpose of $\beta {{W_Q^{1 \dagger}}}$, where $\dagger$ denotes the pseudo-inverse. On the other hand, the second head randomizes $W_Q^2$ and assigns $W_K^2$ as its pseudo-inverse. This means the second condition of (\ref{eq:api_conds}) does not hold for the second head. The other parameters of the first two heads are set so that the first $d_X$ rows of $Y$ compute $\max\{0, Z^1 - Z^2 - \gamma 1^\top\}$. The third and the fourth heads are configured to compute the negations of the first two, i.e., they make the last $d_X$ rows of $Y$ return $\max\{0, Z^2 - Z^1 - \gamma 1^\top\}$. For ease of analysis, we construct $W_V^h$ and $W^O$ from identity and zero matrices. 
    
\textit{Guessing $\mathcal{A}_{\mathsf{Attn-GUESS}}$:} The attacker checks whether any of the weights in $W_O$ have non-zero gradients, and returns $b'=1$ if that is the case.

\textbf{Attack strategy.} 
     $\mathcal{A}_{\mathsf{Attn}}$ exploits the memorization imposed by the first condition of (\ref{eq:api_conds}). To see how it works, we consider following the two cases:
      
     Case 1: If $v \notin X$, $\nicefrac{1}{\beta}X^\top {W_K^{1 \top}} W_Q^1 X \approx X^\top X $, which is the correlation matrix of the tokens. The softmax's output then approximates $I_{l_X}$ as the diagonal of  $X^\top X $ is larger than other entries. The head's output $Z^1\approx X$, i.e., $z^1_i \approx x_i$, as a result. Since the second head behaves similarly in this case, we have $Z^2\approx X$ and $z^2_i \approx x_i$.
     
         Case 2: When $X$ contains $v$, the second condition of (\ref{eq:api_conds}) makes $x_i^\top {W_K^{1 \top}} W_Q^1 x_i \approx 0$, and consequently causes the softmax's output uniform, i.e., the attention is distributed equally among all tokens. Thus, the first head's output is the token's average $z^1_i \approx \Bar{X}$. Since the second head does not filter $v$, $ z^2_i \approx x_i$. The difference $|z^1_i - z^2_i |$ then reveals the presence of $v$ in $X$.

\textbf{Attack advantage.} The advantage of $\mathcal{A}_{\mathsf{Attn}}$ depends on an intrinsic measure of the data, called the \textit{Separation of Patterns}~\citep{ramsauer2021hopfield}:
\begin{definition} (Separation of Patterns).
For a token $x_i$ in $X = \{x_j\}_{j=1}^{l_X}$, its separation $\Delta_i$ from $X$ is $\Delta_i \coloneqq \min _{j, j \neq i}\left(x_i^\top x_i- x_i^\top {x}_j\right)= x_i^\top x_i-\max_{j} x_i^\top {x}_j$. 
We say $X$ is $\Delta$-separated if $\Delta_i \geq \Delta$ for all $i \in \{1,\cdots,l_X\}$. A data $D$ is $\Delta$-separated if all $X$ in $D$ are $\Delta$-separated. 
 \end{definition}

 Intuitively, $\Delta$-separated captures the intrinsic difficulty of adversarial inference on the data $D$: the less separating the data, i.e., a smaller $\Delta$, the harder to distinguish its tokens. In the context of LLMs, $\Delta$ is determined by the choice of the embedding modules. We are now ready for Lemma~\ref{lemma:ami_attn} about the advantage of $\mathcal{A}_{\mathsf{Attn}}$:
 
     \begin{lemma} \label{lemma:ami_attn}
    For a $\Delta$-separated data $D$ with i.i.d tokens of  $\mathsf{Exp}^{\textup{AMI}}$, and for any $\beta > 0$ large enough such that:
             \begin{align}
                \Delta \geq {2}/{(\beta l_X)} +  \log (2 (l_X -1) l_X \beta M^2) /\beta\label{eq:delta_cond_first}
            \end{align}
            the advantage of $\mathcal{A}_{\mathsf{Attn}}$ satisfies:
        \begin{align}
    \Adv^{\textup{AMI}}(\mathcal{A}_{\mathsf{Attn}})  \geq P_{\textup{proj}}^\mathcal{D}\left( \frac{1}{\beta l_X M}\right) + \nonumber\\P_{\textup{proj}}^\mathcal{D}\left( \frac{1}{\beta l_X M}\right)^{2 n l_X} - P_{\textup{box}}^\mathcal{D} (3\Bar{\Delta}) - 1 \label{eq:theorem_adv_api_first}
    \end{align}
    where $\Bar{\Delta} \coloneqq  2 M (l_X -1) \exp \left( 2/l_X-\beta  \Delta \right)$. $P_{\textup{proj}}^\mathcal{D}(\delta)$  is the probability that the projected component between two independent tokens drawn from $\mathcal{D}$ is smaller than  $\delta$ and $P_{\textup{box}}^\mathcal{D} (\delta)$ is the probability that a random token drawn from $\mathcal{D}$ is in the cube of size $2\delta$ centering at the mean of the tokens in $\mathcal{D}$. (Proof in Appx.~\ref{appx:api_attn_advantage})
    \end{lemma} 


\begin{table*}[ht]
\caption{General information of our experiments.}\label{table:exp:general}
\vspace{-2mm}
\begin{adjustbox}{width=1\textwidth}
\begin{tabular}{@{}ccccc@{}}
\toprule
\textbf{Experiment} & \textbf{No. runs}           & \textbf{Adversary} &  \textbf{Dataset} & \textbf{Language model} \\ \midrule
Fig.~\ref{fig:adv_and_lowcond}                 & 200 &     $\mathcal{A}_{\mathsf{Attn}}  $         & One-hot / Spherical / Gaussian             & No                  
\\
Fig.~\ref{fig:exp:API_no_LDP}                 & $20 \times 500$ & $\mathcal{A}_{\mathsf{Attn}}$   & IMDB             & BERT                  \\
Table~\ref{table:no_dp}                 & $4\times 3 \times 1 \times 40$ & $\mathcal{A}_{\mathsf{FC}}$ / $\mathcal{A}_{\mathsf{Attn}}$  / AMI~\citep{nguyen2023active} & IMDB / Twitter / Yelp / Finance              & BERT / DistilBERT / RoBERTa / GPT1 / GPT2                  \\
Table~\ref{table:dp}                 & $4\times 3 \times 4 \times 10$ &   $\mathcal{A}_{\mathsf{FC}}$ / $\mathcal{A}_{\mathsf{Attn}}$  / AMI~\citep{nguyen2023active}       & IMDB / Twitter / Yelp / Finance     & BERT / DistilBERT / RoBERTa / GPT1 / GPT2       \\
Table~\ref{table:layer}                &  $4 \times 1 \times 4 \times 10$ & $\mathcal{A}_{\mathsf{FC}}$ / $\mathcal{A}_{\mathsf{Attn}}$                   & IMDB / Twitter / Yelp / Finance              & BERT / DistilBERT / RoBERTa / GPT1 / GPT2
\end{tabular}
\end{adjustbox}
\end{table*}

    The key step of proving Lemma~\ref{lemma:ami_attn} is to rigorously show the configured attention layer operates as described in the attack strategy. We first bound the outputs $z_i^h$ of the layer when $x_i \neq v$ with Lemma~\ref{lemma:retrieve_bound} (Appx.~\ref{appx:api_attn_advantage}), which is a specific case of the \textit{Exponentially Small Retrieval Error} Theorem~\citep{ramsauer2021hopfield}. The bound claims $z_i^h \approx x_i$ with a probability lower-bounded by $P_{\textup{proj}}^\mathcal{D}\left( \nicefrac{1}{\beta l_X M}\right)$. This controls the false-positive error. When $x_i = v$, $z^1_i \approx \Bar{X}$ as described in the attack strategy. The false negatives happen when there exist other tokens filtered out unintentionally, i.e., they are near the center of the embedding. This probability is bounded by $P_{\textup{box}}^\mathcal{D} (3\Bar{\Delta})$.

    We now state some remarks about the advantage (\ref{eq:theorem_adv_api_first}) on different embeddings and their asymptotic behaviors.

\begin{remark}\label{remark:delta}\textbf{$\Delta$ vs. the advantage (\ref{eq:theorem_adv_api_first}).}
    A larger $\Delta$ allows a smaller $\beta$ satisfied (\ref{eq:delta_cond_first}). It makes $P_{\textup{proj}}^\mathcal{D}\left( \nicefrac{1}{\beta l_X M}\right)$, and consequently, the lower bound (\ref{eq:theorem_adv_api_first}) larger.
\end{remark}


\begin{remark}\label{remark:onehot}\textbf{Most vulnerable embedding.}
    A data resulting in a lower bound (\ref{eq:delta_cond_first}) near 1 is one-hot data. Since it has no token alignment, $\Delta $ achieves its maximum and $P_{\textup{proj}}^\mathcal{D}\left( \nicefrac{1}{\beta l_X M}\right)$ is 1. Furthermore, since there is no token at the center of one-hot, a large $\beta$ can be selected so that $P_{\textup{box}}^\mathcal{D} (3\Bar{\Delta}) = 0$ (See Fig.~\ref{fig:adv_and_lowcond} for more).
\end{remark}

\begin{figure}[!htb]
        \centering
          \includegraphics[width=.9\linewidth]{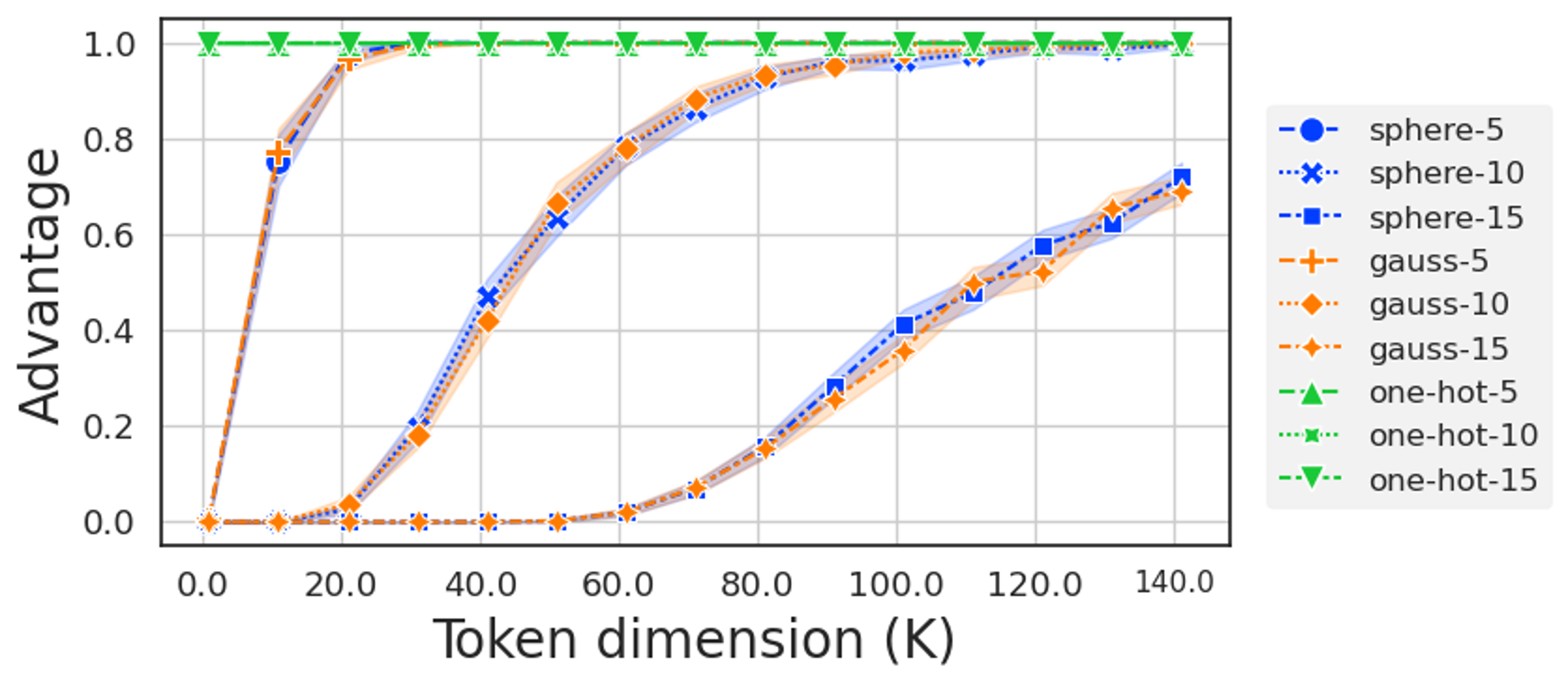}
        \caption{Simulations of the lower bound   (\ref{eq:theorem_adv_api_first}) for spherical, Gaussian and one-hot data with $l_X \in \{ 5, 10, 15  \}$ (left). $\beta$ is chosen s.t. the ratios of $\Delta $ over the RHS of (\ref{eq:delta_cond_first}) $> 1$, i.e., condition (\ref{eq:delta_cond_first}) holds (right).}
        \label{fig:adv_and_lowcond}
\end{figure}

\begin{figure}[!htb]
        \centering
          \includegraphics[width=.7\linewidth]{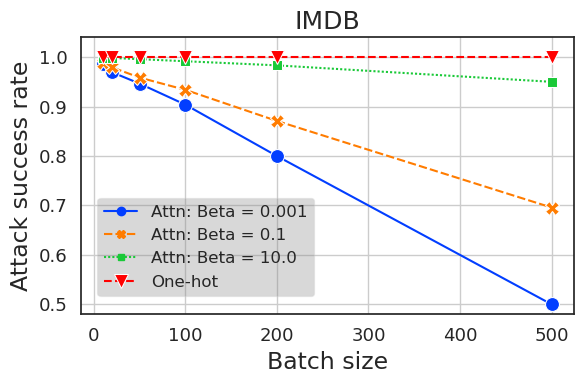}
       \caption{Success rates of $\mathcal{A}_{\mathsf{Attn}}$ on real-world IMDB dataset with different $\beta$.}
        \label{fig:exp:API_no_LDP}
\end{figure}

\begin{remark}
    \textbf{Asymptotic behavior of the advantage.}
    For high dimensional data, i.e., large $d_X$, $\Adv^{\textup{AMI}}(\mathcal{A}) \rightarrow 1$. The reason is, when $d_X \rightarrow \infty$, two random points are surely almost orthogonal ($P_{\textup{proj}}^\mathcal{D} \rightarrow 1$), and a random point is almost always at the boundary ($P_{\textup{box}}^\mathcal{D} \rightarrow 0$)~\citep{blum_hopcroft_kannan_2020}. Our Monte-Carlo simulations for the lower bound  (\ref{eq:theorem_adv_api_first}) for spherical, Gaussian, and one-hot data in Fig.~\ref{fig:adv_and_lowcond} support that claim. It is clear that one-hot data results in an advantage of 1. Appx.~\ref{appx:sphere_and_gauss} provides more explanations for other data. 
\end{remark}

\begin{table*}[ht]
\caption{Average Accuracies, F1, and AUCs of AMI attacks at different layers of LLMs on 4 real-world datasets. The methods are listed in decreasing order of trainable weights where $d_x = 768, l_x\in\{24,32\}$ and $d_h = 1000$ are the feature dimension, the number of tokens, and the method-specific parameter of the benchmark~\citep{nguyen2023active}, respectively. }\label{table:no_dp}
\centering
\resizebox{\textwidth}{!}{
\begin{tabular}{ccccccccccccccccc}
\hline
\multirow{2}{*}{\textbf{Method}} & \multirow{2}{*}{\textbf{No. params}} & \multicolumn{3}{c}{\textbf{BERT}}             & \multicolumn{3}{c}{\textbf{RoBERTa}}          & \multicolumn{3}{c}{\textbf{DistilBERT}}       & \multicolumn{3}{c}{\textbf{GPT1}}             & \multicolumn{3}{c}{\textbf{GPT2}}           \\ \cline{3-17} 
                                      &                                      & ACC           & F1            & AUC           & ACC           & F1            & AUC           & ACC           & F1            & AUC           & ACC           & F1            & AUC           & ACC           & F1            & AUC           \\ \hline
$\mathcal{A}_{\mathsf{FC-Full}}$ (Ours)                        & $2 l_x^2\times d_x^2$                & \textbf{1.00} & \textbf{1.00} & \textbf{1.00} & \textbf{1.00} & \textbf{1.00} & \textbf{1.00} & \textbf{1.00} & \textbf{1.00} & \textbf{1.00} & \textbf{1.00} & \textbf{1.00} & \textbf{1.00} & \textbf{1.00} & \textbf{1.00} & \textbf{1.00} \\
AMI FC Full                           & $l_x \times d_x \times d_{h}$   & 0.94          & 0.95          & 0.94          & 0.78& 0.74& 0.78& 0.87          & 0.88          & 0.88       & \textbf{1.00} & \textbf{1.00} & \textbf{1.00}   & 0.78          & 0.78          & 0.83\\
$\mathcal{A}_{\mathsf{Attn}}$ (Ours)                           & $20 d_x^2$                           & 0.96          & 0.95          & 0.96          & \textbf{1.00} & \textbf{1.00} & \textbf{1.00} & 0.88          & 0.88          & 0.89      & \textbf{1.00} & \textbf{1.00} & \textbf{1.00}     & 0.86          & 0.86          & 0.87\\
$\mathcal{A}_{\mathsf{FC-Token}}$ (Ours)                       & $2 d_x^2$                            & \textbf{1.00} & \textbf{1.00} & \textbf{1.00} & \textbf{1.00} & \textbf{1.00} & \textbf{1.00} & \textbf{1.00} & \textbf{1.00} & \textbf{1.00} & \textbf{1.00} & \textbf{1.00} & \textbf{1.00} & \textbf{1.00} & \textbf{1.00} & \textbf{1.00} \\
AMI FC Token                          & $d_x \times d_{h}$              & 0.94          & 0.95          & 0.94          & 0.77& 0.72& 0.77& 0.86          & 0.88          & 0.87           & \textbf{1.00} & \textbf{1.00} & \textbf{1.00}  & 0.77          & 0.77          & 0.82 \\ \hline
\end{tabular}
}
\end{table*}

\begin{table*}[ht]
\caption{Average Accuracies, F1, and AUCs of AMI attacks under DP defenses on 4 real-world datasets. The reported results are averaging among 4 DP mechanisms} \label{table:dp}
\centering
\resizebox{\textwidth}{!}{
\begin{tabular}{ccccccccccccccccc}
\hline
\multirow{2}{*}{\textbf{$\varepsilon$-DP}} & \multirow{2}{*}{\textbf{Method}} & \multicolumn{3}{c}{\textbf{BERT}} & \multicolumn{3}{c}{\textbf{RoBERTa}} & \multicolumn{3}{c}{\textbf{DistilBERT}} & \multicolumn{3}{c}{\textbf{GPT1}} & \multicolumn{3}{c}{\textbf{GPT2}} \\ \cline{3-17} 
                                        &                                       & ACC       & F1        & AUC       & ACC        & F1         & AUC        & ACC         & F1          & AUC         & ACC       & F1        & AUC       & ACC        & F1         & AUC       \\ \hline
\multirow{5}{*}{10}                     & $\mathcal{A}_{\mathsf{FC-Full}}$ (Ours)                        & 0.94      & 0.93      & 0.97      & 0.97       & 0.96       & 0.99       & 0.96        & 0.96        & 0.98    & 0.96       & 0.95       & 0.99    & 0.96      & 0.96      & 1.00            \\
                                        & AMI FC Full                           & 0.68      & 0.68      & 0.69      & 0.67& 0.66& 0.67& 0.68        & 0.69        & 0.70    & 0.62       & 0.62       & 0.65    & 0.66 & 0.65      & 0.67       \\
                                        & $\mathcal{A}_{\mathsf{Attn}}$ (Ours)                           & 0.69      & 0.65      & 0.72      & 0.76       & 0.73       & 0.81       & 0.72        & 0.68        & 0.76     & 0.81       & 0.79       & 0.86    & 0.63      & 0.59      & 0.64           \\
                                        & $\mathcal{A}_{\mathsf{FC-Token}}$ (Ours)                       & 0.87      & 0.85      & 0.92      & 0.90       & 0.89       & 0.93       & 0.90        & 0.90        & 0.96          & 0.91       & 0.89       & 0.94     & 0.91      & 0.90      & 0.94     \\
                                        & AMI FC Token                          & 0.66      & 0.67      & 0.67      & 0.65& 0.63       & 0.64       & 0.66        & 0.67        & 0.68           & 0.61       & 0.60       & 0.63    & 0.62      & 0.62      & 0.63     \\ \hline
\multirow{5}{*}{7.5}                    & $\mathcal{A}_{\mathsf{FC-Full}}$ (Ours)                        & 0.84      & 0.83      & 0.88      & 0.83       & 0.81       & 0.87       & 0.82        & 0.80        & 0.87     & 0.82       & 0.79       & 0.89    & 0.86      & 0.84      & 0.90           \\
                                        & AMI FC Full                           & 0.57      & 0.56      & 0.58      & 0.58       & 0.58       & 0.58       & 0.62        & 0.61        & 0.62     & 0.52       & 0.51       & 0.54   & 0.58& 0.59& 0.59       \\
                                        & $\mathcal{A}_{\mathsf{Attn}}$ (Ours)                           & 0.62      & 0.58      & 0.64      & 0.63       & 0.59       & 0.65       & 0.65        & 0.62        & 0.65      & 0.66       & 0.64       & 0.70    & 0.54      & 0.55      & 0.56          \\
                                        & $\mathcal{A}_{\mathsf{FC-Token}}$ (Ours)                       & 0.72      & 0.68      & 0.77      & 0.76       & 0.73       & 0.82       & 0.76        & 0.74        & 0.81       & 0.69       & 0.65       & 0.73   & 0.76      & 0.74      & 0.80          \\
                                        & AMI FC Token                          & 0.55      & 0.54      & 0.56      & 0.59& 0.58& 0.60& 0.60        & 0.59        & 0.61       & 0.50       & 0.51       & 0.52     & 0.53      & 0.54      & 0.54        \\ \hline
\multirow{5}{*}{5}                      & $\mathcal{A}_{\mathsf{FC-Full}}$ (Ours)                        & 0.67      & 0.64      & 0.69      & 0.67       & 0.64       & 0.71       & 0.63        & 0.60        & 0.65      & 0.68       & 0.65       & 0.70     & 0.66      & 0.64      & 0.71         \\
                                        & AMI FC Full                           & 0.55      & 0.54      & 0.56      & 0.53       & 0.54       & 0.53       & 0.54        & 0.52        & 0.55       & 0.53       & 0.51       & 0.53    & 0.55      & 0.56      & 0.56         \\
                                        & $\mathcal{A}_{\mathsf{Attn}}$ (Ours)                           & 0.52      & 0.50      & 0.53      & 0.54       & 0.52       & 0.54       & 0.52        & 0.50        & 0.53    & 0.56       & 0.52       & 0.56     & 0.51      & 0.51      & 0.51           \\
                                        & $\mathcal{A}_{\mathsf{FC-Token}}$ (Ours)                       & 0.57      & 0.53      & 0.60      & 0.60       & 0.58       & 0.63       & 0.59        & 0.58        & 0.61     & 0.58       & 0.53       & 0.59    & 0.57      & 0.55      & 0.59           \\
                                        & AMI FC Token                          & 0.53      & 0.52      & 0.53      & 0.51       & 0.51       & 0.50       & 0.52        & 0.50        & 0.53    & 0.50       & 0.51       & 0.51       & 0.50        & 0.50      & 0.50         \\ \hline
\end{tabular}
}
\end{table*}

\begin{remark}\label{remark:beta_mem}\textbf{Impact of $\beta$.} A larger $\beta$ in $\mathcal{A}_{\mathsf{Attn}}$ increases the memorization of the attention~\citep{ramsauer2021hopfield}. We demonstrate its impact via an experiment of BERT~\citep{BERT2019} and one-hot embeddings on the IMDB dataset. Fig.~\ref{fig:exp:API_no_LDP} shows that large $\beta$ can help  $\mathcal{A}_{\mathsf{Attn}}$ make correct inferences even when the batch size is very large (500 sentences). One-hot embedding still achieves perfect inference regardless of the batch size as stated in Remark~\ref{remark:onehot}.
\end{remark}

As  $\mathcal{A}_{\mathsf{Attn}}$ can be constructed in $\mathcal{O} (d_X^3)$, the Lemma gives us the theoretical vulnerability of unprotected data in FL with attention layer. We conclude this section with Theorem~\ref{theorem:ami_attn} about that claim.
    \begin{theorem} \label{theorem:ami_attn}
    There exists an adversary $\mathcal{A}$ exploiting a self-attention layer whose advantage satisfying (\ref{eq:theorem_adv_api_first}) with time complexity $\mathcal{O} (d_X^3)$  in the game $\mathsf{Exp}^{\textup{AMI}}$.
    \end{theorem}



\section{EXPERIMENTS}\label{sect:experiments}

\begin{table}[ht]
    \centering
    \caption{Accuracies, F1, and AUCs at different layers ($\varepsilon=10$). We highlight the entries of the layers where the metrics achieve their highest values for each attack.}
    \label{table:layer}
\centering
    \resizebox{0.99\linewidth}{!}{
    \begin{tabular}{@{}cccccccc@{}}
    \toprule
    \multirow{2}{*}{\textbf{Layer}} & \multirow{2}{*}{\textbf{Method}} & \multicolumn{3}{c}{\textbf{RoBERTa}} & \multicolumn{3}{c}{\textbf{GPT1}} \\ \cmidrule(l){3-8} 
                                    &                                   & ACC        & F1         & AUC        & ACC        & F1         & AUC       \\ \midrule
    \multirow{2}{*}{Early}          & $\mathcal{A}_{\mathsf{FC-Token}}$                                & 0.88       & 0.87       & 0.92       &\textbf{ 0.93}       & \textbf{0.93}       & \textbf{0.97 }     \\
                                    & $\mathcal{A}_{\mathsf{Attn}}$                              & 0.79       &\textbf{ 0.78}       & \textbf{0.87}       &\textbf{ 0.87}       &\textbf{ 0.86}       & \textbf{0.92}      \\ \midrule
    \multirow{2}{*}{Mid}            & $\mathcal{A}_{\mathsf{FC-Token}}$                                 & \textbf{0.92  }     & \textbf{0.91}       & \textbf{0.94 }      & 0.91       & 0.89       & 0.94      \\
                                    & $\mathcal{A}_{\mathsf{Attn}}$                              &\textbf{ 0.80}       & 0.77       & 0.86       & 0.83       & 0.80       & 0.91      \\ \midrule
    \multirow{2}{*}{Late}           & $\mathcal{A}_{\mathsf{FC-Token}}$                                 & 0.91       & 0.90       & \textbf{0.94}       & 0.88       & 0.85       & 0.90      \\
                                    & $\mathcal{A}_{\mathsf{Attn}}$                              & 0.68       & 0.65       & 0.71       & 0.73       & 0.71       & 0.75      \\ \bottomrule
    \end{tabular}
    }
\end{table}

This section provides experiments demonstrating the practical risks of leaking private data in FL with LLMs~\footnote{Our code is publicly available at \url{https://github.com/vunhatminh/FL_Attacks.git}}. In particular, we implement the FC-based $\mathcal{A}_{\mathsf{FC}}$ and self-attention-based $\mathcal{A}_{\mathsf{Attn}}$ adversaries and evaluate them in real-world setting. For the FC-based, we implement 2 versions, called $\mathcal{A}_{\mathsf{FC-Full}}$ and $\mathcal{A}_{\mathsf{FC-token}}$, which operate at a full sentence and at the token level (See Remark~\ref{remark:twoFC}). For the benchmark, we extend the previous work~\citep{nguyen2023active} into two inference attacks also at sentence and token levels, called AMI FC Full and AMI FC Token, respectively.


\textbf{Experimental Settings}. Our attacks are evaluated on 5 state-of-the-art language models: BERT~\citep{BERT2019}, RoBERTa~\citep{Liu2019RoBERTaAR}, distilBERT~\citep{sanh2019distilbert}, GPT1~\citep{gpt3}, and GPT2~\citep{gpt2}. The experiments are on 4 real-world text datasets: IMDB review~\citep{imdb_dataset}, Yelp review~\citep{zhangCharacterlevelConvolutionalNetworks2015}, Twitter-emotion~\citep{saravia-etal-2018-carer}, and Finance~\citep{Casanueva2020}. All models and datasets are from the Hugginingface database~\citep{lhoest-etal-2021-datasets}. Our experiment follows the security game in Fig.~\ref{fig:AMI}. The batch size $D$ is chosen to be 40. More information on the setting of our experiments is discussed in Appx.~\ref{appx:exp_setting}.

To realize DP mechanisms, we use Generalized Randomized Response (GRR)~\citep{dwork2006calibrating}, Google RAPPOR~\citep{erlingsson2014rappor}, Histogram encoding (HE)~\citep{THE_USENIX} and Microsoft dBitFlipPM~\citep{ding2017collecting} implemented by~\citep{Arcolezi2022}. Table~\ref{table:exp:general} provides the general information of our experiments. 
The \textit{No. runs} indicates the total number of simulated security games for each point plotted/reported in our figures/tables. For instance, the expression $4\times 3 \times 4 \times 10$ means the results are averaged over 4 datasets, at 3 layers of the model, using 4 DP mechanisms and 10 security games. {The reported privacy budget is applied once to the whole dataset and that budget is for a single communication round.}


\textbf{AMI without defense.} Table~\ref{table:no_dp} reports the average accuracies, F1 scores, and AUCs of attacks on all datasets at three locations of each model. The surfaces of attacks (Fig.~\ref{fig:petuning_2}) are the first, the middle, and the last layers (specified in more detail in Appx.~\ref{appx:exp_setting}). The results are the averages among 4 datasets. Notably, FC-based attacks consistently achieve a $100\%$ success rate, aligning with the statement in Lemma~\ref{lemma:ami_none}. Furthermore, the attention-based attack exhibits competitive performance when compared to the benchmark~\citep{nguyen2023active}. Note that, all of our methods have theoretical guarantees, and do not require neural network training. The table also includes the number of trainable weights in each attack. It serves as an indicator of the amount of fingerprints resulting from the attacks.

\textbf{AMI with DP.} We apply 4 DP mechanisms with budgets $\varepsilon\in \{5,7.5,10\}$ to evaluate the attacks in the high, medium, and low privacy regimes. Table~\ref{table:dp} reports the results averaging from 4 mechanisms, 4 datasets, and 3 locations of each model. This gives us a general idea of how attacks perform under the presence of protected noise. Results for each defense are reported separately in Appx.~\ref{appx:exp_more}.

The first observation is that, at $\varepsilon=5$, almost all attacks fail to infer the pattern, i.e. AUCs $< 0.72$; however, at mid and low privacy regimes, our attacks still achieve significant success rates. Especially, the FC-based $\mathcal{A}_{\mathsf{FC-Full}}$ and $\mathcal{A}_{\mathsf{FC-Token}}$ are always the top-2 with the highest performance. The results also show the advantages of  $\mathcal{A}_{\mathsf{FC-Full}}$ over $\mathcal{A}_{\mathsf{FC-Token}}$ since it exploits all input information. On the other hand, both neural network approaches using AMI FC degrade rapidly with the presence of noise. Since they rely on over-fitting the target pattern, it becomes more challenging to successfully train the inference networks when the input dimension becomes large in the context of LLMs.

\textbf{Inference at different layers.} It is interesting to examine the privacy leakage at different locations of LLMs. We might expect that the deeper the attacking surface, the more private the data. However, we figured out that is not necessarily the case. Table~\ref{table:layer} shows the performance of our attacks at different locations of RoBERTa and GPT1 at $\varepsilon=10$. While the claim holds for GPT1, it is not correct for RoBERTa. In fact, the layer with the highest success inference rates in RoBERTa is the mid-layer 6. Our hypothesis for the phenomenon is the tokens at that layer are more separated than those at the earlier layers. More results for other models are provided in Appx.~\ref{appx:exp_more}.

\section{CONCLUSION}\label{sect:conclude}

This work studies the formal threat models for AMI attacks with dishonest FL servers and demonstrates significant privacy threats in utilizing LLMs with FL for real-world applications. We provide evidence for the high success rates of active inference attacks, confirmed by both theoretical analysis and 
experimental evaluations. Our findings underscore the critical vulnerability of unprotected data in FL when confronted with dishonest servers. We extend our investigation to practical language models and gauge the privacy risks across different levels of DP budgets. Looking ahead, our future work will focus on identifying the prerequisites for a more secure centralized FL system and implementing these conditions effectively in practice. Furthermore, from a system perspective, we intend to explore decentralized FL protocols as a means to eliminate trust in a central server. We hope this work can serve as a stepping stone for future systematic updates and modifications of existing protocols, to make them more secure and robust for Federated LLMs.

\section*{Acknowledgments}
This material is partially supported by the National Science Foundation under the SaCT program, grant number CNS-1935923.






\bibliography{minh}
\bibliographystyle{plainnat}

\section*{Checklist}



 \begin{enumerate}

 \item For all models and algorithms presented, check if you include:
 \begin{enumerate}
   \item A clear description of the mathematical setting, assumptions, algorithm, and/or model. [Yes]
   \item An analysis of the properties and complexity (time, space, sample size) of any algorithm. [Yes]
   \item (Optional) Anonymized source code, with specification of all dependencies, including external libraries. [Yes]
 \end{enumerate}

 \item For any theoretical claim, check if you include:
 \begin{enumerate}
   \item Statements of the full set of assumptions of all theoretical results. [Yes]
   \item Complete proofs of all theoretical results. [Yes]
   \item Clear explanations of any assumptions. [Yes]     
 \end{enumerate}

 \item For all figures and tables that present empirical results, check if you include:
 \begin{enumerate}
   \item The code, data, and instructions needed to reproduce the main experimental results (either in the supplemental material or as a URL). [Yes]
   \item All the training details (e.g., data splits, hyperparameters, how they were chosen). [Yes]
    \item A clear definition of the specific measure or statistics and error bars (e.g., with respect to the random seed after running experiments multiple times). [Yes, except for error statistics in the tables. The reason is the space limit.]
    \item A description of the computing infrastructure used. (e.g., type of GPUs, internal cluster, or cloud provider). [Yes]
 \end{enumerate}

 \item If you are using existing assets (e.g., code, data, models) or curating/releasing new assets, check if you include:
 \begin{enumerate}
   \item Citations of the creator If your work uses existing assets. [Yes]
   \item The license information of the assets, if applicable. [Yes]
   \item New assets either in the supplemental material or as a URL, if applicable. [Yes]
   \item Information about consent from data providers/curators. [Yes]
   \item Discussion of sensible content if applicable, e.g., personally identifiable information or offensive content. [Yes]
 \end{enumerate}

 \item If you used crowdsourcing or conducted research with human subjects, check if you include:
 \begin{enumerate}
   \item The full text of instructions given to participants and screenshots. [Not Applicable]
   \item Descriptions of potential participant risks, with links to Institutional Review Board (IRB) approvals if applicable. [Not Applicable]
   \item The estimated hourly wage paid to participants and the total amount spent on participant compensation. [Not Applicable]
 \end{enumerate}

 \end{enumerate}

\onecolumn
\appendix
\aistatstitle{APPENDIX}

This is the appendix of our paper \textit{Analysis of Privacy Leakage in Federated Large Language Models}. Its main content and outline are as follows:
\begin{itemize}
    \item Appendix~\ref{appx:vul_ami} shows more details of our results about the FC-based adversary in FL.
    \begin{itemize}
        \item Appendix~\ref{appx:ami_adv}: the description of our FC-based adversary for AMI.
        \item Appendix~\ref{appx:ami_mlp_noldp}: the proof of Lemma~\ref{lemma:ami_none}.
    \end{itemize}
    \item Appendix~\ref{appx:vul_api} shows more details of our results about the Attention-based adversary in FL.
    \begin{itemize}
        \item Appendix~\ref{appx:api_attn_advantage}: the proof of Lemma~\ref{lemma:ami_attn}.
        \item Appendix~\ref{appx:sphere_and_gauss}: the asymptotic behavior of the advantages of our self-attention-based adversary for spherical and Gaussian data. 
    \end{itemize}
    \item Appendix~\ref{appx:exp_setting} provides the details of our experiments reported in the main manuscript.
    \begin{itemize}
        \item Appendix~\ref{appx:more_dataset}: details of the datasets.
        \item Appendix~\ref{appx:implement_adv}: details implementation of our adversaries.
    \end{itemize}
    \item Appendix~\ref{appx:exp_more} provides additional experimental results.
\end{itemize}

\newpage


\section{Vulnerability of FL to AMI Attack Exploiting FC Layers} \label{appx:vul_ami}
This appendix reports the details of our theoretical results on AMI in FL. In Appx.~\ref{appx:ami_adv}, we provide the descriptions of the proposed FC-based AMI adversary used in our analysis. Appx.~\ref{appx:ami_mlp_noldp} shows the proof of Lemma~\ref{lemma:ami_none}. 

\subsection{FC-Based Adversary for AMI in FL} \label{appx:ami_adv}

We now describe the FC-based adversary  $\mathcal{A}_{\mathsf{FC}}$ mentioned in Sect.~\ref{sect:ami_mlp}. The adversary is specified by the descriptions of its 3 components $\mathcal{A}_{\mathsf{FC-INIT}}$, $\mathcal{A}_{\mathsf{FC-ATTACK}}$ and $\mathcal{A}_{\mathsf{FC-GUESS}}$.
 
 
\textbf{AMI initialization $\mathcal{A}_{\mathsf{FC-INIT}}$:} The model specified by the adversary uses FC as its first two layers. Given an input $X \in \mathbb{R}^{d_X}$, the attacker computes $\textup{ReLU}(W_l X + b_l) = \max(0, W_l X + b_l)$ where $W_l$ is the weights and $b_l$ is the bias of layer $l$. We set the dimensions of $W_1$ and $b_1$ to $2 d_X \times d_X$ and $2 d_X$, respectively. For the second layer, the attack only considers one of its output neurons, thus, it only requires the number of columns of $W_2$ to be $2 d_X$. We use $W_2[1,:]$ and $b_2[1]$ to refer to the parameters of the row of $W_2$ and the entry of $b_2$ corresponding to that neuron. Any configurations with a higher number of parameters can adopt our proposed attack because extra parameters can simply be ignored.

\textbf{AMI attack $\mathcal{A}_{\mathsf{FC-ATTACK}}$:} The weights and biases of the first two FC layers are set as:
 \begin{align}
     W_1 \leftarrow \begin{bmatrix}
    I_{d_X} \\ 
    - I_{d_X}
    \end{bmatrix}, \quad b_1 \leftarrow \begin{bmatrix}
    - T \\ 
    T 
    \end{bmatrix}, \quad 
    W_2[1,:] \leftarrow -1_{d_X}^\top, \quad b_2[1] \leftarrow \tau
 \end{align}
 where $I_{d_X}$ is the identity matrix and $1_{d_X}$ is the one vector of size $d_X$. The hyper-parameter $\tau$ controls the total allowable distance between an input $X$ and the target $T$, which can be obtained from the distribution statistics. The pseudo-code of the attack is shown in Algo.~\ref{algo:ami_attack_fc}.

\textbf{AMI guess $\mathcal{A}_{\mathsf{FC-GUESS}}$:} In the guessing phase, the AMI server returns $1$ if the gradient of $b_2[1]$ is non-zero and returns $0$ otherwise. Algo.~\ref{algo:ami_guess_fc} shows the pseudo-code of this step. 

\SetKwInput{KwInput}{Hyper-parameters} 

\begin{algorithm}[ht]
    \SetAlgoLined
    \KwInput{$ \tau \in \mathbb{R}^+$} 
    
    \cmt{Configuring $W_1 \in \mathbb{R}^{2d_X \times d_X}$ and $b_1 \in \mathbb{R}^{2d_X }$ of the first FC}\\
    $W_1 \leftarrow \begin{bmatrix}
    I_{d_X} \\ 
    - I_{d_X}
    \end{bmatrix}, \quad b_1 \leftarrow \begin{bmatrix}
    - T \\ 
    T 
    \end{bmatrix} $

    \cmt{Configuring the first row of $W_2 \in \mathbb{R}^{d \times 2 d_X}$ and the first entry of $b_2 \in \mathbb{R}^{d }$ of the second FC}\\
        $W_2[1,:] \leftarrow -1_{2d_X}^\top, \quad b_2[1] \leftarrow \tau $
        
    \textbf{Ret} all weights and biases

\caption{$\mathcal{A}_{\mathsf{FC-ATTACK}}(T)$ exploiting fully-connected layer in AMI}
\label{algo:ami_attack_fc}
\end{algorithm}

\begin{algorithm}[ht]
    \SetAlgoLined
    
    \cmt{If the gradient of $b_2[1]$ is non-zero, returns $1$}\\
    \If{$ |\dot{\theta}(b_2[1]) | > 0$}{ \textbf{Ret} $1$}
    \textbf{Ret} $0$

\caption{$\mathcal{A}_{\mathsf{FC-GUESS}}(T,  \dot{\theta})$ exploiting fully-connected layer in AMI}
\label{algo:ami_guess_fc}
\end{algorithm}

\subsection{Proof of Lemma~\ref{lemma:ami_none} on the Advantage of the Adversary $\mathcal{A}_{\mathsf{FC}}$ in FL} \label{appx:ami_mlp_noldp}


This appendix provides the proof of Lemma~\ref{lemma:ami_none}. We restate the Lemma below:


\begin{lemma*} 
    The advantage of the adversary $\mathcal{A}_{\mathsf{FC}}$ in the security game $\mathsf{Exp}^{\textup{AMI}}$ is 1, i.e., $\Adv^{\textup{AMI}}(\mathcal{A}_{\mathsf{FC}})  = 1$.
\end{lemma*}

\begin{proof}
For the model specified by $\mathcal{A}_{\mathsf{FC}}$ as discussed in Subsection~\ref{subsect:fc_adv}, its first layer computes:
\begin{align}
    \textup{ReLU} \left( \begin{bmatrix}
    I_{d_X} \\ 
    - I_{d_X}
    \end{bmatrix} X + \begin{bmatrix}
    - T \\ 
    T 
    \end{bmatrix}\right)
    =
    \textup{ReLU} \left( \begin{bmatrix}
    X - T \\ 
    T - X 
    \end{bmatrix}\right)
\end{align}
The first row of the second layer then computes:
\begin{align}
    z_0 := & \textup{ReLU} \left( - \sum_{i=1}^{d_X}\textup{ReLU} \left( \left( x_i - t_i    \right) +  \textup{ReLU} \left( t_i - x_i   \right) \right) + \tau \right) 
    =  \max \left \{  \tau - \| X - T\|_{L_1}, 0 \right \} \label{eq:z0}
\end{align}
This implies the gradient of $b_2[1] = \tau$ is non-zero if and only if $ \tau > \| X - T\|_{L_1} $.  Thus, for a small enough $\tau$,  $T \in D$ is equivalent to a non-zero gradient. Since $\dot{\theta}(b_2[1])$ is the average of gradients of $b_2[1]$ over $D$, we have:
\begin{align}
    &\textup{If } b = 0 \Longrightarrow \  z_0 = 0, \forall X \in D \Longrightarrow \  |\dot{\theta}(b_2[1]) |= 0 \Longrightarrow \   \mathcal{A}_{\mathsf{FC-GUESS}} \textup{ returns } 0\\
    &\textup{If } b = 1 \Longrightarrow \ \left\{\begin{matrix}
 z_0 > 0 \textup{ for } X = T \in D
\\ 
z_0 = 0 \textup{ for other } X \in D 
\end{matrix}\right. \Longrightarrow \  | \dot{\theta}(b_2[1]) | > 0 \Longrightarrow \  \mathcal{A}_{\mathsf{FC-GUESS}} \textup{ returns } 1
\end{align}
Thus, the advantage of $\mathcal{A}$ is $1$. 
\end{proof}

\section{Vulnerability of FL to AMI Attack Exploiting Self-Attention Mechanism}  \label{appx:vul_api}
This appendix provides the details of our theoretical results on exploiting the self-attention mechanism in FL. We show in Appx.~\ref{appx:api_attn_advantage} the proof of Lemma~\ref{lemma:ami_attn} about the advantage of our proposed attention-based attack. We then discuss the asymptotic behavior of the advantages of $\mathcal{A}_{\mathsf{Attn}}$ for spherical and Gaussian data in Appendix~\ref{appx:sphere_and_gauss}.



\subsection{Proof of  Lemma~\ref{lemma:ami_attn} on the Advantage of the Adversary $\mathcal{A}_{\mathsf{Attn}}$} \label{appx:api_attn_advantage}

We now state a Lemma bounding the error of the self-attention layer in memorization mode. The Lemma can be considered as a specific case of Theorem 5 of \citep{ramsauer2021hopfield}. In the context of that work, they use the term \textit{for well-separated pattern} in their main manuscript to indicate the condition that the Theorem holds. In fact, the condition (\ref{eq:cond_lemma}) stated in our Lemma is a sufficient condition for that Theorem of~\citep{ramsauer2021hopfield}. For the completeness of this work, we now provide the highlight of the proof of Lemma~\ref{lemma:retrieve_bound} based on the theoretical results established in the work~\citep{ramsauer2021hopfield}.

\begin{lemma}\label{lemma:retrieve_bound}
     Given a data $X$, a constant $\alpha > 0$ large enough such that, for an $x_i \in X$:
     \begin{align}
            \Delta_i \geq \frac{2}{\alpha l_X} + \frac{1}{\alpha} \log (2 (l_X -1) l_X \alpha M^2) \label{eq:cond_lemma}
    \end{align}
    then, for any $\xi$ such that $\| \xi - x_i\| \leq  \frac{1}{\alpha l_X M}$, we have
        \begin{align*}
         \left\| x_i - X \textup{softmax} \left( \alpha X^\top \xi\right) \right\| \leq 2  M (l_X -1) \exp \left( 2/l_X-\alpha  \Delta_i \right)
     \end{align*}
\end{lemma}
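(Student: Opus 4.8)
The plan is to reproduce, in a self-contained way, the ``attractor near a well-separated pattern'' estimate underlying Theorem~5 of \citep{ramsauer2021hopfield}. The starting observation is that the self-attention output is a convex combination of the stored patterns: writing $p := \textup{softmax}(\alpha X^\top \xi) \in \mathbb{R}^{l_X}$ with $p_j = \exp(\alpha x_j^\top \xi) / \sum_{k} \exp(\alpha x_k^\top \xi)$, we have $X\,\textup{softmax}(\alpha X^\top \xi) = \sum_{j=1}^{l_X} p_j x_j$. The entire argument then reduces to showing that this convex combination concentrates its weight on index $i$, i.e., that $1 - p_i$ is exponentially small, and translating that into a norm bound.

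First I would bound the retrieval error by the off-target softmax mass. By the triangle inequality and the pattern-norm bound $\|x_j\| \le M$,
\begin{align*}
  \left\| x_i - \sum_{j} p_j x_j \right\| = \left\| \sum_{j \ne i} p_j (x_i - x_j) \right\| \le \sum_{j\ne i} p_j \,\|x_i - x_j\| \le 2M\,(1 - p_i).
\end{align*}
Next I would bound $1 - p_i$. Since the softmax denominator is at least $\exp(\alpha x_i^\top \xi)$, one gets $1 - p_i \le \sum_{j\ne i} \exp\!\bigl(\alpha (x_j - x_i)^\top \xi\bigr)$. Splitting $(x_j - x_i)^\top \xi = (x_j - x_i)^\top x_i + (x_j - x_i)^\top (\xi - x_i)$, the first summand equals $-(x_i^\top x_i - x_i^\top x_j) \le -\Delta_i$ by the definition of the separation $\Delta_i$, while the second is at most $\|x_j - x_i\|\,\|\xi - x_i\| \le 2M \cdot \frac{1}{\alpha l_X M} = \frac{2}{\alpha l_X}$ by Cauchy--Schwarz and the hypothesis $\|\xi - x_i\| \le \frac{1}{\alpha l_X M}$. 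Hence $1 - p_i \le (l_X - 1)\exp\!\bigl(2/l_X - \alpha \Delta_i\bigr)$, and substituting this into the previous display yields exactly the claimed bound $2M(l_X-1)\exp(2/l_X - \alpha\Delta_i)$.

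Finally I would comment on the role of hypothesis~(\ref{eq:cond_lemma}): it is not literally needed for the displayed inequality, but a short rearrangement shows that (\ref{eq:cond_lemma}) is equivalent to $2M(l_X-1)\exp(2/l_X - \alpha\Delta_i) \le \frac{1}{\alpha l_X M}$, so that under this condition the update keeps $\xi$ inside the ball $\{\,\|\cdot - x_i\| \le \frac{1}{\alpha l_X M}\,\}$ and can therefore be iterated, matching the convergence regime of \citep{ramsauer2021hopfield}; this is why it is stated alongside the bound. I do not expect a genuine obstacle here, since the estimate is elementary; the only points requiring care are aligning the definition of $\Delta_i$ with the one used in the main text and keeping the directions of the softmax inequalities straight, so that the exponent comes out as $2/l_X - \alpha \Delta_i$ rather than something weaker.
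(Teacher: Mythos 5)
Your proof is correct, but it takes a genuinely different route from the paper's. The paper treats the lemma as a corollary of the machinery in \citep{ramsauer2021hopfield}: it introduces the sphere $S_i$, invokes their Lemmas A5--A7 (that $f_\alpha$ maps $S_i$ into itself, is a contraction there, and admits a fixed point $x_i^*$), and then instantiates their Theorem 5, bounding $\max\{\|\xi-x_i\|,\|x_i^*-x_i\|\}\le 1/(\alpha l_X M)$ to obtain the stated exponent. You instead rederive the one-step retrieval estimate from scratch: the convex-combination decomposition, the bound $\|x_i - Xp\|\le 2M(1-p_i)$, and the softmax tail bound $1-p_i\le\sum_{j\ne i}\exp\bigl(\alpha(x_j-x_i)^\top\xi\bigr)$ combined with the split through $x_i$ and Cauchy--Schwarz. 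This is essentially the content of the cited auxiliary results unpacked, and it buys self-containedness: no fixed point is needed, and your observation that condition (\ref{eq:cond_lemma}) is exactly equivalent to the error bound being at most $1/(\alpha l_X M)$ (i.e., $f_\alpha(S_i)\subseteq S_i$) correctly identifies its role as the well-separation/iterability hypothesis rather than something needed for the single-step display. The paper's route is shorter given the external reference; your derivation additionally makes the exponent transparent and, in passing, resolves a sign slip in the paper's proof, where $\exp(-2/l_X-\alpha\Delta_i)$ appears although the lemma statement (and your computation) give $\exp(2/l_X-\alpha\Delta_i)$. The one point to pin down is the definition of $\Delta_i$: your step $(x_j-x_i)^\top x_i\le-\Delta_i$ requires the dot-product separation $\Delta_i=\min_{j\ne i}(x_i^\top x_i-x_i^\top x_j)$ as in \citep{ramsauer2021hopfield}, which you flag yourself and which is the convention the paper inherits.
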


\begin{proof}

    Define the sphere $S_{i} \coloneqq \{ v \textup{ such that } \| v - x_i\| \leq 1/(\alpha l_X M)\}$. We now restate and apply some results of~\citep{ramsauer2021hopfield}:

    For $\Delta_i$ satisfying (\ref{eq:cond_lemma}) and a mapping $f_\alpha$ defined as $f_\alpha(\xi) =  X \textup{softmax} \left( \alpha X^\top \xi\right) $, we have:
    \begin{itemize}
        \item The image of $S_i$ induced by $f_\alpha$ is in $S_i$, i.e., $f_\alpha$ is a mapping from $S_i$ to $S_i$ (Lemma A5~\citep{ramsauer2021hopfield}).
        \item $f_\alpha$ is a contraction mapping in $S_i$ (Lemma A6~\citep{ramsauer2021hopfield}).
        \item $f_\alpha$ has a fixed point in $S_i$ (Lemma A7~\citep{ramsauer2021hopfield}).
        \item Since $f_\alpha$ is a contraction mapping in $S_i$ and $\xi \in S_i$, we have
        \begin{align*}
        \left\| x_i - f_\alpha(\xi)\right\| \leq  2  M (l_X -1) \exp \left( -\alpha (\Delta_i - 2 \max \{ \| \xi - x_i\|, \| x_i^* - x_i\|\}\right)    
        \end{align*}
        where $x_i^*$ is a fixed point in $S_i$ (Theorem 5~\citep{ramsauer2021hopfield}).
    \end{itemize}
    Since both $x_i^*$ and $\xi$ are in $S_i$, we have $\max \{ \| \xi - x_i\|, \| x_i^* - x_i\|\} \leq   1/(\alpha l_X M)$. Therefore, we obtain
    \begin{align*}
         \left\| x_i - f_\alpha(\xi)\right\| \leq  2  M (l_X -1) \exp \left( -2/l_X - \alpha \Delta_i\right)
    \end{align*}
    Thus, we have the Lemma.
\end{proof}

Intuitively, Lemma~\ref{lemma:retrieve_bound} claims that, if we have a pattern $\xi$ near $x_i$, $X \textup{softmax} \left( \alpha X^\top \xi\right)$ is exponentially near $\xi$ as a function of $\Delta_i$. Another key remark of the Lemma is that $ \left\| x_i - X \textup{softmax} \left( \alpha X^\top \xi\right) \right\|$ exponentially approaches $0$ as the input dimension increases~\citep{ramsauer2021hopfield}. 

We now prove Lemma~\ref{lemma:ami_attn} . We restate the Lemma below.

\begin{lemma*} 
    Given a $\Delta$-separated data $D$ with i.i.d tokens of the experiment $\mathsf{Exp}^{\textup{AMI}}$, for any $\beta > 0$ large enough such that:
             \begin{align*}
                \Delta \geq {2}/{(\beta l_X)} +  \log (2 (l_X -1) l_X \beta M^2) /\beta
            \end{align*}
            the advantage of $\mathcal{A}_{\mathsf{Attn}}$ satisfies:
        \begin{align*}
    \Adv^{\textup{AMI}}(\mathcal{A}_{\mathsf{Attn}})  \geq P_{\textup{proj}}^\mathcal{D}\left( \frac{1}{\beta l_X M}\right) + P_{\textup{proj}}^\mathcal{D}\left( \frac{1}{\beta l_X M}\right)^{2 n l_X} - P_{\textup{box}}^\mathcal{D} (3\Bar{\Delta}) - 1 
    \end{align*}
    where $\Bar{\Delta} \coloneqq  2 M (l_X -1) \exp \left( 2/l_X-\beta  \Delta \right)$. $P_{\textup{proj}}^\mathcal{D}(\delta)$  is the probability that the projected component between two independent tokens drawn from $\mathcal{D}$ is smaller than  $\delta$ and $P_{\textup{box}}^\mathcal{D} (\delta)$ is the probability that a random token drawn from $\mathcal{D}$ is in the cube of size $2\delta$ centering at the arithmetic mean of the tokens in $\mathcal{D}$. 
    \end{lemma*} 
\begin{proof}
 For brevity, we first consider the following expression and related notations of the output of one attention head without the head indexing $h$:
\begin{align}
      X\textup{softmax} \left( 1/\sqrt{\datt}  X^\top {W_K}^\top W_Q X \right) \label{eq:one_head_simple}
\end{align}
Notice that we omit $W_V$ because they are all set to identity. 

We now consider the AMI adversary $\mathcal{A}_{\mathsf{Attn}}$ specified in Subsection~\ref{subsect:attn_adv}. As $W \in \mathbb{R}^{d_X \times d_X}$ (line \ref{line:init} Algo.~\ref{algo:api_attack}) is initiated randomly, it has a high probability to be non-singular even with the assignment of $v$ onto its first column (line \ref{line:setv} Algo.~\ref{algo:api_attack}). For ease of analysis, we assume $W$ has full rank. If that is not the case, we can simply re-run those 2 lines of the algorithm. For the same arguments, we also assume all $W_Q^h$ and $W_K^h$ have rank $\datt = d_X -1$.

For all heads, we have $W_K = \beta(W_Q^\top)^\dagger$ (lines \ref{line:pinv1} and \ref{line:pinv2} Algo.~\ref{algo:api_attack}). As  a consequence, $\frac{1}{\beta} W_K^\top W_Q = W_Q^\dagger W_Q$ is the projection matrix onto the column space of $W_Q^\top$. By denoting $[\xi_1,\cdots,\xi_{l_x}] = \Xi = \frac{1}{\beta} {W_K}^\top W_Q X$, we have $\xi_j$ is the projection of the token $x_j$ onto that space.

For head 1 and head 3, due to line \ref{line:setv}, we can write $W=[v,w_2,\cdots,w_{d_X}]$. From the QR factorization (line \ref{line:QR}), we have:
\begin{align*}
    &QR = [v,w_2,\cdots w_{d_X}] 
    \longrightarrow R = Q^\top [v,w_2,\cdots w_{d_X}]
\end{align*}
Since $R$ is an upper triangular matrix, $v$ is orthogonal to all rows $Q_i, i \in \{2,\cdots,d_X\}$ of $Q^\top$. Furthermore, by the assignment at line \ref{line:Q_assign}, we have the column space of $W_Q^\top$ is the linear span of $\{Q_i\}_{i=2}^{d_X}$, which are all orthogonal to $v$. Consequently, the difference between $X$ and $\Xi$ is the component of $X$ along the $v$ direction:
\begin{align}
     X - \Xi^h = [ x_1 - \xi^h_1, \cdots,x_{l_X} - \xi^h_{l_X} ] = [\bar{x}^v_1, \cdots , \bar{x}^v_{l_X}], \quad h \in \{1,3\} 
     \label{eq:proj_v}
\end{align}
where $\bar{x}^v_j$ is the component of token $x_j \in \mathbb{R}^{d_X}$ along $v$. 

For head 2 and head 4, even though we do not conduct the QR-factorization, $\frac{1}{\beta} W_K^\top W_Q $ of those heads are also project matrices, just on different column space. These spaces are also of rank $d_X-1$ and it omits one direction. By calling that direction $u$, we can write the difference between $X$ and $\Xi$ for those heads as:
\begin{align}
     X - \Xi^h = [ x_1 - \xi^h_1, \cdots,x_{l_X} - \xi^h_{l_X} ] = [\bar{x}^u_1, \cdots , \bar{x}^u_{l_X}], \quad h \in \{2,4\}
     \label{eq:proj_u}
\end{align}

We now denote $f_{\alpha}: \mathbb{R}^{d_X \times l_x} \rightarrow \mathbb{R}^{d_X \times l_x}$ as:
\begin{align*}
    \Xi' = f_{\alpha}(\Xi) = X \textup{softmax} \left(\alpha X^\top \Xi  \right)
\end{align*}
For brevity, we also abuse the notation and write $ \xi' = f_{\alpha}(\xi) = X \textup{softmax} \left(\alpha X^\top \xi  \right)$ for $\xi$ and $\xi' \in \mathbb{R}^{d_X}$.

With that, the output of the layer before ReLU can be written as:
\begin{align*}
Z =& \begin{bmatrix}
  f_{\beta}(\Xi^1) - f_{\beta}(\Xi^2) - \gamma 1^\top \\ 
  f_{\beta}(\Xi^4) - f_{\beta}(\Xi^3) - \gamma 1^\top
\end{bmatrix} =
\begin{bmatrix}
  f_{\beta}(\Xi^1) - f_{\beta}(\Xi^2) - \gamma 1^\top \\ 
  f_{\beta}(\Xi^2) - f_{\beta}(\Xi^1) - \gamma 1^\top
\end{bmatrix}
\\
=&
\begin{bmatrix}
  f_{\beta}(\xi_1^1) - f_{\beta}(\xi_1^2) - \gamma, & \cdots ,&
  f_{\beta}(\xi_{l_X}^1) - f_{\beta}(\xi_{l_X}^2) - \gamma\\ 
  f_{\beta}(\xi_1^2) - f_{\beta}(\xi_1^1) - \gamma, & \cdots ,&
  f_{\beta}(\xi_{l_X}^2) - f_{\beta}(\xi_{l_X}^1) - \gamma
\end{bmatrix}
\end{align*}
where $\gamma$ is as given in Algo.~\ref{algo:api_attack} while $\beta$ is rescaled with a factor of  $1/\sqrt{\datt}$. With the above expressions, we can see that $Z$ has non-zero entries if and only if:
\begin{align}
    \exists i \textup{ such that }  \| f_{\beta}(\xi^1_i) - f_{\beta}(\xi^2_i) \|_\infty > \gamma \label{eq:gamma}
\end{align}
Note that the usage of heads 3 and 4 is for the entries of $f_{\beta}(\xi^1_i)$ that are smaller than those in $f_{\beta}(\xi^2_i)$. The condition (\ref{eq:gamma}) can also be expressed as
\begin{align*}
     \| f_{\beta}(\Xi^1) - f_{\beta}(\Xi^2) \|_\infty > \gamma
\end{align*}


For a given token $x_i$, we now consider two cases: $x_i \neq v$ and $x_i = v$.

\textbf{Case 1.} For a token $x_i \in X$ such that $x_i \neq v$, from Lemma~\ref{lemma:retrieve_bound}, we have:
\begin{align*}
    \left\| x_i - f_{\beta}(\xi^1_i) \right \|  &\leq  2  M (l_X -1) \exp \left( 2/l_X-\beta  \Delta_i \right) \\
    \left\| x_i - f_{\beta}(\xi^2_i) \right \|  &\leq  2  M (l_X -1) \exp \left( 2/l_X-\beta  \Delta_i \right)
\end{align*}

when $\|x_i - \xi^1_i \| = \|\bar{x}^v_i\| \leq 1/(\beta l_X M)$ and $\|\bar{x}^u_i\| \leq 1/(\beta l_X M) $, respectively. Note that we need to choose a $\beta$ large enough for the Lemma to hold. We denote those events by $A_i^1$ and  $A_i^2$.
Thus, from the triangle inequality, we have:
\begin{align*}
    \| f_{\beta}(\xi^1_i) - f_{\beta}(\xi^2_i) \| 
    \leq & \left\|  f_{\beta}(\xi^1_i) - x_i \right \| + \left\| x_i - f_{\beta}(\xi^2_i) \right \| 
    \leq 4  M (l_X -1) \exp \left( 2/l_X-\beta  \Delta_i \right)  \coloneqq  2\Bar{\Delta}_i
\end{align*}
with probability of $\Prob{A_i^1 \cap A_i^2}$. Here, we denote $\Bar{\Delta}_i \coloneqq 2 M (l_X -1) \exp \left( 2/l_X-\beta  \Delta_i \right)$. We further loosen the inequality with the infinity-norm, which bounds the maximum absolute difference in the pattern's feature:
\begin{align*}
     \| f_{\beta}(\xi^1_i) - f_{\beta}(\xi^2_i) \|_\infty \leq 2 \Bar{\Delta}_i
\end{align*}
Since the data point $X$ is $\Delta$-separated, i.e., $\Delta \leq \Delta_i$, we further have:
\begin{align*}
     \| f_{\beta}(\xi^1_i) - f_{\beta}(\xi^2_i) \|_\infty \leq 2 \Bar{\Delta}
\end{align*}
where $\Bar{\Delta} \coloneqq  2 M (l_X -1) \exp \left( 2/l_X-\beta  \Delta \right)$.

We now consider the event $A^1_i$. Basically, that is the event the component of $x_i$ along $v$ is smaller than a constant determined by the data distribution $\mathcal{D}$. Furthermore, since both $v$ and $x_i$ are drawn independently from the distribution (specified in the experiment $\mathsf{Exp}^{\textup{AMI}}_{\textsc{None}}$), they can be considered as two random patterns drawn from the input distribution. Thus, the probability of $A^i_1$ is the probability that the projected component between two random tokens is smaller or equal to $\frac{1}{\beta l_X M}$. Formally, given an input distribution $\mathcal{D}$, we denote $P_{\textup{proj}}^\mathcal{D}(\delta)$ the probability that the projected component between any independent patterns drawn from $\mathcal{D}$ is less than or equal to $\delta$. We then have:
\begin{align*}
    \Prob{A_i^1 } = P_{\textup{proj}}^\mathcal{D}\left( \frac{1}{\beta l_X M}\right)
\end{align*}
by definition. Similarly, for $A_i^2$, we have:
\begin{align*}
    \Prob{A_i^2 } = \Prob{\|\bar{x}^u_i\| \leq 1/(\beta l_X M)}=P_{\textup{proj}}^\mathcal{D}\left( \frac{1}{\beta l_X M}\right)
\end{align*}
Since $v$ and $u$ are independent, we obtain:
\begin{align*}
     \Prob{A_i^1 \cap A_i^2 } \geq P_{\textup{proj}}^\mathcal{D}\left( \frac{1}{\beta l_X M}\right)^2
\end{align*}

\textbf{Case 2.} Since $v$ is orthogonal to all  $\{Q_i\}_{i=2}^{d_X}$ for the matrix $Q$ at line \ref{line:QR}, $W_Q v = 0$ and, therefore, $\xi_i = 0$. Consequently, if $x_i = v$, the output of the softmax is the one-vector scaled with a factor of $1/l_X$. We then have:
\begin{align*}
    f_{\beta}(\xi_i) = X \textup{softmax}(0) = \frac{1}{l_X} \sum_{j=1}^{l_X} x_j \coloneqq \Bar{X}
\end{align*}
Since $x_i = v$, we then obtain:
\begin{align*}
     f_{\beta}(\xi^1_i) - f_{\beta}(\xi^2_i) = 
     \Bar{X} -  f_{\beta}(\xi^2_i)  =
     \left( \Bar{X} - v \right) +
     \left( x_i - f_{\beta}(\xi^2_i) \right)
\end{align*}
Thus, from triangle inequality, we have:
\begin{align}
    &\left \| f_{\beta}(\xi^1_i) - f_{\beta}(\xi^2_i) \right \|_{\infty} \geq 
      \left \| \Bar{X} - v \right  \|_{\infty} -
     \left \| x_i - f_{\beta}(\xi^2_i)  \right \|_{\infty} \\
     \geq& \left \| \Bar{X} - v \right  \|_{\infty} - \Bar{\Delta}_i \geq  \left \| \Bar{X} - v \right  \|_{\infty} - \Bar{\Delta} \label{eq:box_bound}
\end{align}
when $A^2_i$ happens, whose probability is $P_{\textup{proj}}^\mathcal{D}\left( \frac{1}{\beta l_X M}\right)$. We now consider the probability that $\left\| f_{\beta}(\xi^1_i) - f_{ \beta}(\xi^2_i) \right\|_{\infty} > 2\Bar{\Delta}$: 
\begin{align}
    &\Prob{ \left\| f_{\beta}(\xi^1_i) - f_{\beta}(\xi^2_i) \right\|_{\infty} > 2\Bar{\Delta} } \\
    =& 
    1 - \Prob{ \left\| f_{\beta}(\xi^1_i) - f_{\beta}(\xi^2_i) \right\|_{\infty} \leq 2\Bar{\Delta} } \\
    =& 
    1 - \Prob{ \left\| f_{\beta}(\xi^1_i) - f_{\beta}(\xi^2_i) \right\|_{\infty} \leq 2\Bar{\Delta} | A_i^2} \Prob{  A_i^2 } \nonumber \\ 
    -&  \Prob{ \left\| f_{\beta}(\xi^1_i) - f_{\beta}(\xi^2_i) \right\|_{\infty} \leq 2\Bar{\Delta} | \neg A_i^2 } \Prob{ \neg A_i^2 } \\
    \geq&  1 - \Prob{ \left\| f_{\beta}(\xi^1_i) - f_{\beta}(\xi^2_i) \right\|_{\infty} \leq 2\Bar{\Delta} | A_i^2}  - \Prob{ \neg A_i^2 }
    \\
    \geq&    1 - \Prob{  \left \| \Bar{X} - v \right  \|_{\infty} - \Bar{\Delta} \leq 2\Bar{\Delta} | A_i^2 } - \Prob{ \neg A_i^2 } 
    \label{eq:use_box_bound}\\
   =&    1 -  \Prob{ \left\|  \Bar{X} - v   \right\|_{\infty} \leq 3\Bar{\Delta} } - \Prob{ \neg A_i^2 } \label{eq:drop_cond}\\
    =& P_{\textup{proj}}^\mathcal{D}\left( \frac{1}{\beta l_X M}\right) - \Prob{v \in \textup{Box}(\Bar{X}, 3\Bar{\Delta}) } \label{eq:box}
\end{align}
where $\textup{Box}(x,\delta )$ is the cube of size $2\delta$ centering at $x$. The inequality (\ref{eq:use_box_bound}) is due to (\ref{eq:box_bound}) and (\ref{eq:drop_cond}) is from the fact that $u$ is independent from $X$ and $v$. 

We now consider $\Prob{v \in \textup{Box}(\Bar{X}, 3\Bar{\Delta}) }$, which is the probability that the pattern $v$ belongs to the cube of size $ 6 \Bar{\Delta}$ around the sampled mean of the tokens in $X$. We denote $P_{\textup{box}}^\mathcal{D}(\delta)$ the probability that a random pattern drawn from $\mathcal{D}$ is in the cube of size $2\delta$ centering at the arithmetic mean of the tokens in $\mathcal{D}$. If the length $l_X$ of $X$ is large enough, we have the sampled mean $\Bar{X}$ is near the arithmetic mean of the tokens and obtain $\Prob{v \in \textup{Box}(\Bar{X}, 3\Bar{\Delta}) } \approx P_{\textup{box}}^\mathcal{D}(3\Bar{\Delta}) $.

\textbf{Back to main analysis.} From the analysis of the two cases, if $v \notin X$, we have:
\begin{align*}
\Prob{ \| f_{\beta}(\xi^1_i) - f_{\beta}(\xi^2_i) \|_\infty \leq 2 \Bar{\Delta}} \geq P_{\textup{proj}}^\mathcal{D}\left( \frac{1}{\beta l_X M}\right)^2, \quad \forall i \in \{1, \cdots, l_X \}
\end{align*}
If the tokens are independent, we then have:
\begin{align*}
    \Prob{ \| f_{\beta}(\Xi^1) - f_{\beta}(\Xi^2) \|_\infty \leq 2 \Bar{\Delta}} 
    =
    \prod_{i=1}^{l_X}\Prob{ \| f_{\beta}(\xi^1_i) - f_{\beta}(\xi^2_i) \|_\infty \leq 2 \Bar{\Delta}} \geq P_{\textup{proj}}^\mathcal{D}\left( \frac{1}{\beta l_X M}\right)^{2 l_X}
\end{align*}

Since the data points of $D$ are sampled independently, if $v$ does not appear in $D$, we then have:
\begin{align*}
     \Prob{ \| f_{\beta}(\Xi^1) - f_{\beta}(\Xi^2) \|_\infty \leq 2 \Bar{\Delta} \ \textup{ for all } X \in D} \geq  P_{\textup{proj}}^\mathcal{D}\left( \frac{1}{\beta l_X M}\right)^{2 n l_X}
\end{align*}

On the other hand, if $v \in X$, we have:
\begin{align*}
    &\exists i \in \{1, \cdots, l_X \} \textup{ such that } \Prob{ \| f_{\beta}(\xi^1_i) - f_{\beta}(\xi^2_i) \|_\infty > 2 \Bar{\Delta}} \geq   1- P_{\textup{box}}^\mathcal{D}(3\Bar{\Delta}) \\
    \Rightarrow & \Prob{ \| f_{\beta}(\Xi^1) - f_{\beta}(\Xi^2) \|_\infty > 2 \Bar{\Delta}} 
    \geq   P_{\textup{proj}}^\mathcal{D}\left( \frac{1}{\beta l_X M}\right) - P_{\textup{box}}^\mathcal{D} (3\Bar{\Delta})
\end{align*}

Thus, if pattern $v$ appears in $D$, we have:
\begin{align*}
    \Prob{ \exists X \in D \textup{ such that } \| f_{\beta}(\Xi^1) - f_{\beta}(\Xi^2) \|_\infty > 2 \Bar{\Delta}} 
    \geq   P_{\textup{proj}}^\mathcal{D}\left( \frac{1}{\beta l_X M}\right) - P_{\textup{box}}^\mathcal{D} (3\Bar{\Delta})
\end{align*}

By choosing $\gamma = 2 \Bar{\Delta}$, we have the probability that our proposed adversary wins is:
\begin{align}
    P_{W} =& \Prob{v \in D} \Prob{\|\dot{\theta}_1(W^O) \|_{\infty} > 0 | v \in  D}
    +
    \Prob{v \notin D} \Prob{\|\dot{\theta}_1(W^O) \|_{\infty} = 0 | v \notin D} \label{eq:gamma_2_delta}\\
    =& \frac{1}{2} \Prob{ \exists X \in D \textup{ such that } \| f_{\beta}(\Xi^1) - f_{\beta}(\Xi^2) \|_\infty > 2 \Bar{\Delta}| v \in  D} \nonumber \\
    + & \frac{1}{2} \Prob{ \| f_{\beta}(\Xi^1) - f_{\beta}(\Xi^2) \|_\infty \leq 2 \Bar{\Delta} \ \textup{ for all } X \in D | v \notin  D} \\
    \geq & \frac{1}{2} \left( P_{\textup{proj}}^\mathcal{D}\left( \frac{1}{\beta l_X M}\right)- P_{\textup{box}}^\mathcal{D} (3\Bar{\Delta}) \right ) +
    \frac{1}{2} P_{\textup{proj}}^\mathcal{D}\left( \frac{1}{\beta l_X M}\right)^{2 n l_X}
\end{align}

Thus, the advantage of the adversary $\mathcal{A}_\mathsf{Atnn}$ in Algo.~\ref{algo:api_attack} can be bounded by:
\begin{align}
    \Adv^{\textup{AMI}}(\mathcal{A}_\mathsf{Atnn}) = 2 P_W  - 1 \geq P_{\textup{proj}}^\mathcal{D}\left( \frac{1}{\beta l_X M}\right) + P_{\textup{proj}}^\mathcal{D}\left( \frac{1}{\beta l_X M}\right)^{2 n l_X} - P_{\textup{box}}^\mathcal{D} (3\Bar{\Delta}) - 1
    \label{eq:proj_minus_box}
\end{align}

\subsection{$\Adv^{\textup{AMI}}(\mathcal{A}_\mathsf{Atnn})$ Approaches 1 for Spherical and Gaussian Data} \label{appx:sphere_and_gauss}

This appendix provides the mathematical explanations for the asymptotic behaviors of the advantages $\Adv^{\textup{AMI}}(\mathcal{A}_\mathsf{Atnn})$ for spherical and Gaussian data (Fig.~\ref{fig:adv_and_lowcond}). 

  \textit{Spherical data.} We now consider the tokens are uniformly distributed on a unit sphere, i.e., $\|v\| = M = 1$ for all $v$. We now argue that, when $\beta$ is large enough, we have the condition (\ref{eq:delta_cond_first}) of Lemma~\ref{lemma:ami_attn} and a small $\Bar{\Delta}$ so that $P_{\textup{box}}^\mathcal{D} (3\Bar{\Delta}) = 0$. For the projecting probability $P_{\textup{proj}}$, we have the distribution of the projected component between any pair of random tokens is the distribution of any one component of a random token. The reason is the choice of the second token does not matter and we can simply select it to be the standard vector $e_i$. Therefore, the expected value of the projected component between any pair of random tokens is $1/\sqrt{d_X}$. Thus, $P_{\textup{proj}}^\mathcal{D}\left( \frac{1}{\beta l_X M}\right)$ approaches 1 as $d_X$ increases. Consequentially, $\Adv^{\textup{AMI}}(\mathcal{A}_\mathsf{Atnn}) \rightarrow 1$ as $d_X$ increases.

 \textit{Gaussian data.} When the token in $X$ have standard
normally distributed components, we have the expected value and the variance of the separation of two points are $d_X$ and $3 d_X$, respectively~\citep{ramsauer2021hopfield}. We then assume that $\Delta > C_1 d_X$ for some constant $C_1$. Since the expected norm of each token is $\sqrt{d_X}$, we further assume that $M \leq C_2 \sqrt{d_X}$ for some constant $C_2$. With that, we can select $\beta = K d_X^{-1}$ with $K$ large enough such that $\Bar{\Delta}$ is small and the condition (\ref{eq:delta_cond_first}) holds. By that choice, we have  $P_{\textup{box}}^\mathcal{D} (3\Bar{\Delta})$ is the probability that all $l_X$ normal random variables are in $[-3\Bar{\Delta}, 3\Bar{\Delta}]$, which clearly approaches $0$ as $d_X$ increases. On the other hand, from the Central Limit Theorem, we have the dot product of two random tokens converge in distribution to a one-dimensional normal random variable, i.e., $ u \cdot v / \sqrt{d_X} \rightarrow \mathcal{N}(0,1)$. Thus, as $d_X$ increases, we have $\Prob{ \frac{ | u \cdot v| }{\| v\|}  < C_3} \rightarrow 1$ for a large enough constant $C_3$. As $\frac{1}{\beta l_X M} > \frac{\sqrt{d_X}}{K l_X  C_2}$, we have  $P_{\textup{proj}}^\mathcal{D}\left( \frac{1}{\beta l_X M}\right) \rightarrow 1$.

\end{proof}

\newpage

\section{Experimental Settings} \label{appx:exp_setting}
This appendix provides the experimental details of our experiments. Our experiments are implemented using Python 3.8 and conducted on a single GPU-assisted compute node that is installed with a Linux 64-bit operating system. The allocated resources include 36 CPU cores with 2 threads per core and 60GB of RAM. The node is also equipped with 8 GPUs, with 80GB of memory per GPU.

We now provide more information on the tested dataset in Appx.~\ref{appx:more_dataset} and the details implementation of our adversaries in Appx.~\ref{appx:implement_adv}.

\subsection{Dataset and the Language Models} \label{appx:more_dataset}

In total, our experiments are conducted on 3 synthetic datasets and 4 real-world datasets. The synthetic datasets are one-hot encoded data, Spherical data (data on the boundary of a unit ball), and Gaussian data (each dimension is an independent $\mathcal{N}(0,1)$). The real-world datasets are IMDB (movie reviews)~\citep{imdb_dataset}, Yelp (general reviews)~\citep{zhangCharacterlevelConvolutionalNetworks2015}, Twitter 
 (Twitter messages with emotions)~\citep{saravia-etal-2018-carer}, and Finance (meassage with intents)~\citep{Casanueva2020}.  The real-world datasets are pre-processed with the LLMs as embedding modules to obtain the data $D$ in our threat models. For the synthetic datasets, we use a batch size of $1$ since it does not affect the asymptotic behaviors and provides better intuition for the experiment of Fig.~\ref{fig:adv_and_lowcond}. For the real-world datasets, we use a batch size of 40. 

\begin{table}[ht]
\caption{General information about the LLMs.}\label{table:exp:llm}
\centering
\begin{adjustbox}{width=0.6\textwidth}
\begin{tabular}{@{}cccccc@{}}
\toprule
                          & \textbf{BERT} & \textbf{RoBERTa} & \textbf{DistilBERT} & \textbf{GPT1} & \textbf{GPT2} \\ \midrule
\textbf{No. params}       & 110M          & 125M             & 67M                 & 120M          & 137M            \\
\textbf{No. layers}       & 12            & 12               & 6                   & 12             & 6              \\
\textbf{No. tokens} ($l_x$)      & 32            & 32               & 32                  & 32            & 32              \\
\textbf{Dimension}    ($d_x$)    & 768           & 768              & 768                 & 768           & 768             \\
\textbf{Attacking layers} & 1,6,12        & 1,6,12           & 1,3,6               & 1,6,12         & 1,3,6          \\ \bottomrule
\end{tabular}
\end{adjustbox}
\end{table}

Table~\ref{table:exp:llm} provides information about the LLMs examined in our experiments. Their implementations are provided by the Huggingface library~\citep{lhoest-etal-2021-datasets}. The rows \textit{No. tokens} and \textit{dimension} refers to the number of tokens and the embedded dimension of the tokens. The dimension is 768 at all hidden layers of all language models. The \textit{Attacking layers} are the layers that we conduct our inference attacks. They correspond to the early, middle, and late locations of attacking mentioned in Sect.~\ref{sect:experiments} of the main manuscript.

\subsection{Implementation of the Adversaries} \label{appx:implement_adv}

In our theoretical analysis of $\mathcal{A}_{\mathsf{FC}}$ and $\mathcal{A}_{\mathsf{Attn}}$, we have specified how their hyper-parameters should be chosen so that theoretical guarantees can be achieved. For convenience reference, we restate those settings here:
\begin{itemize}
    \item For Lemma~\ref{lemma:ami_none},  $\tau$ can be any small positive number such that $\tau < \| X - T\|_{L_1}$  for all $X \neq T$ in the dataset. In other words, we just need to choose a very small positive number. The argument is made at Subsect.~\ref{subsect:fc_adv}.
     \item For Lemma~\ref{lemma:ami_attn}, $\beta$ is chosen large enough such that the condition (\ref{eq:delta_cond_first}) of the Lemma holds and  $\gamma$ is set to $2 \Bar{\Delta} $. The argument is made at (\ref{eq:gamma_2_delta}) in Appx.~\ref{appx:api_attn_advantage}.
\end{itemize}

\begin{table}[ht]
\caption{Values of $\beta$ in the reported experiments. }\label{table:exp:beta}
\vspace{1mm}
\centering
\begin{adjustbox}{width=0.5\textwidth}
\begin{tabular}{@{}ccc@{}}
\toprule
\textbf{Dataset}    & \textbf{Note on $\beta$}            & $\beta$  \\ \midrule
One-hot / Spherical & $\beta$ is is set to a constant     & 10                     \\
Gaussian            & $\beta \propto  1/d_X$              & $10/d_X$                      \\
Real-world data                & The more noise, the smaller $\beta$ &  2                      \\ \bottomrule
\end{tabular}
\end{adjustbox}
\end{table}

As stated in Remark~\ref{remark:beta_mem} the parameter $\beta$ of our proposed attention-based adversary determines how much the input patterns are memorized. While increasing $\beta$ can increase the adversary's success rate of inference, it decreases the attack's performance when privacy-preserving noise is added to the data. The actual values of $\beta$ in our experiments are reported in Table~\ref{table:exp:beta}.


\begin{table*}[ht]
\caption{Average Accuracies, F1, and AUCs of AMI attacks under GRR defense~\citep{dwork2006calibrating}.} \label{table:dp_grr}
\centering
\resizebox{\textwidth}{!}{
\begin{tabular}{ccccccccccccccccc}
\hline
\multirow{2}{*}{\textbf{$\varepsilon$}} & \multirow{2}{*}{\textbf{Method}}   & \multicolumn{3}{c}{\textbf{BERT}} & \multicolumn{3}{c}{\textbf{RoBERTa}} & \multicolumn{3}{c}{\textbf{DistilBERT}} & \multicolumn{3}{c}{\textbf{GPT1}} & \multicolumn{3}{c}{\textbf{GPT2}} \\ \cline{3-17} 
                                        &                                         & ACC       & F1        & AUC       & ACC        & F1         & AUC        & ACC         & F1          & AUC         & ACC       & F1        & AUC       & ACC        & F1         & AUC       \\ \hline
\multirow{3}{*}{10}                     & $\mathcal{A}_{\mathsf{FC-Full}}$  & 1.00      & 1.00      & 1.00      & 1.00       & 1.00       & 1.00       & 1.00        & 1.00        & 1.00        & 1.00      & 1.00      & 1.00      & 1.00       & 1.00       & 1.00      \\
                                       
                                        & $\mathcal{A}_{\mathsf{Attn}}$    & 0.89      & 0.88      & 0.92      & 0.99       & 0.99       & 1.00       & 0.82        & 0.80        & 0.82      & 1.00       & 1.00       & 1.00  & 0.84      & 0.82      & 0.85            \\
                                        & $\mathcal{A}_{\mathsf{FC-Token}}$                          & 1.00      & 1.00      & 1.00      & 1.00       & 1.00       & 1.00       & 1.00        & 1.00        & 1.00        & 1.00      & 1.00      & 1.00      & 1.00       & 1.00       & 1.00          \\ \hline
\multirow{3}{*}{7.5}                    & $\mathcal{A}_{\mathsf{FC-Full}}$  & 0.99      & 0.99      & 1.00      & 1.00       & 1.00       & 1.00       & 1.00        & 1.00        & 1.00        & 0.99      & 0.99      & 1.00      & 0.99       & 0.99       & 1.00      \\
                                        & $\mathcal{A}_{\mathsf{Attn}}$    & 0.77      & 0.74      & 0.82      & 0.94       & 0.93       & 0.99       & 0.82        & 0.82        & 0.85      & 0.93       & 0.93       & 0.99    & 0.70      & 0.67      & 0.75          \\
                                        & $\mathcal{A}_{\mathsf{FC-Token}}$                         & 0.97      & 0.96      & 1.00      & 0.94       & 0.93       & 0.99       & 0.98        & 0.98        & 0.99      & 0.97       & 0.97       & 1.00  & 0.99      & 0.99      & 1.00            \\
                                       \hline
\multirow{3}{*}{5}                      & $\mathcal{A}_{\mathsf{FC-Full}}$ & 0.83      & 0.81      & 0.85      & 0.89       & 0.87       & 0.95       & 0.81        & 0.79        & 0.87         & 0.82       & 0.79       & 0.90  & 0.83      & 0.79      & 0.96          \\
                                       
                                        & $\mathcal{A}_{\mathsf{Attn}}$     & 0.56      & 0.51      & 0.60      & 0.67       & 0.62       & 0.69       & 0.61        & 0.52        & 0.64     & 0.64       & 0.60       & 0.67     & 0.50      & 0.50      & 0.50          \\
                                        & $\mathcal{A}_{\mathsf{FC-Token}}$                         & 0.71      & 0.65      & 0.79      & 0.72       & 0.70       & 0.77       & 0.74        & 0.71        & 0.76    & 0.71       & 0.65       & 0.73      & 0.72      & 0.71      & 0.78          \\
                                        \hline
\end{tabular}
}
\end{table*}

\begin{table*}[ht]
\caption{Average Accuracies, F1, and AUCs of AMI attacks under RAPPOR defense~\citep{erlingsson2014rappor}.} \label{table:dp_rappor}
\centering
\resizebox{\textwidth}{!}{
\begin{tabular}{ccccccccccccccccc}
\hline
\multirow{2}{*}{\textbf{$\varepsilon$}} & \multirow{2}{*}{\textbf{Method}} & \multicolumn{3}{c}{\textbf{BERT}} & \multicolumn{3}{c}{\textbf{RoBERTa}} & \multicolumn{3}{c}{\textbf{DistilBERT}} & \multicolumn{3}{c}{\textbf{GPT1}} & \multicolumn{3}{c}{\textbf{GPT2}} \\ \cline{3-17} 
                                        &                                       & ACC       & F1        & AUC       & ACC        & F1         & AUC        & ACC         & F1          & AUC         & ACC       & F1        & AUC       & ACC        & F1         & AUC       \\ \hline
\multirow{3}{*}{10}                     & $\mathcal{A}_{\mathsf{FC-Full}}$      & 0.97      & 0.97      & 1.00      & 0.98       & 0.98       & 1.00       & 0.98        & 0.98        & 1.00         & 1.00       & 1.00       & 1.00  & 0.99      & 0.99      & 1.00          \\
                                        & $\mathcal{A}_{\mathsf{Attn}}$         & 0.63      & 0.59      & 0.69      & 0.73       & 0.70       & 0.81       & 0.71        & 0.67        & 0.74        & 0.80       & 0.77       & 0.89   & 0.55      & 0.50      & 0.54         \\
                                        & $\mathcal{A}_{\mathsf{FC-Token}}$     & 0.90      & 0.88      & 0.95      & 0.94       & 0.93       & 0.97       & 0.92        & 0.91        & 0.97      & 0.91       & 0.89       & 0.95    & 0.95      & 0.94      & 0.98          \\ \hline
\multirow{3}{*}{7.5}                    & $\mathcal{A}_{\mathsf{FC-Full}}$      & 0.87      & 0.85      & 0.90      & 0.84       & 0.82       & 0.87       & 0.82        & 0.79        & 0.90       & 0.83       & 0.81       & 0.92     & 0.88      & 0.85      & 0.91        \\
                                        & $\mathcal{A}_{\mathsf{Attn}}$         & 0.60      & 0.56      & 0.63      & 0.55       & 0.52       & 0.57       & 0.61        & 0.58        & 0.60     & 0.57       & 0.53       & 0.60    & 0.50      & 0.50      & 0.50           \\
                                        & $\mathcal{A}_{\mathsf{FC-Token}}$     & 0.68      & 0.63      & 0.76      & 0.76       & 0.73       & 0.82       & 0.73        & 0.70        & 0.78     & 0.59       & 0.55       & 0.64    & 0.73      & 0.71      & 0.81           \\ \hline
\multirow{3}{*}{5}                      & $\mathcal{A}_{\mathsf{FC-Full}}$      & 0.62      & 0.60      & 0.68      & 0.55       & 0.51       & 0.57       & 0.53        & 0.51        & 0.52       & 0.60       & 0.56       & 0.55   & 0.62      & 0.61      & 0.66          \\
                                        & $\mathcal{A}_{\mathsf{Attn}}$         & 0.50      & 0.50      & 0.50      & 0.50       & 0.50       & 0.50       & 0.50        & 0.50        & 0.50     & 0.50       & 0.50       & 0.50      & 0.50      & 0.50      & 0.50         \\
                                        & $\mathcal{A}_{\mathsf{FC-Token}}$     & 0.53      & 0.50      & 0.53      & 0.53       & 0.50       & 0.55       & 0.55        & 0.54        & 0.58     & 0.57       & 0.50       & 0.58      & 0.50      & 0.50      & 0.51        \\ \hline
\end{tabular}
}
\end{table*}

\begin{table*}[ht]
\caption{Average Accuracies, F1, and AUCs of AMI attacks under THE defense~\citep{THE_USENIX}.} \label{table:dp_the}
\centering
\resizebox{\textwidth}{!}{
\begin{tabular}{ccccccccccccccccc}
\hline
\multirow{2}{*}{\textbf{$\varepsilon$}} & \multirow{2}{*}{\textbf{Method}} & \multicolumn{3}{c}{\textbf{BERT}} & \multicolumn{3}{c}{\textbf{RoBERTa}} & \multicolumn{3}{c}{\textbf{DistilBERT}} & \multicolumn{3}{c}{\textbf{GPT1}} & \multicolumn{3}{c}{\textbf{GPT2}} \\ \cline{3-17} 
                                        &                                       & ACC       & F1        & AUC       & ACC        & F1         & AUC        & ACC         & F1          & AUC         & ACC       & F1        & AUC       & ACC        & F1         & AUC       \\ \hline
\multirow{3}{*}{10}                     & $\mathcal{A}_{\mathsf{FC-Full}}$      & 0.80      & 0.77      & 0.89      & 0.90       & 0.88       & 0.97       & 0.87        & 0.84        & 0.92       & 0.86       & 0.83       & 0.96 & 0.89      & 0.88      & 0.98            \\
                                        & $\mathcal{A}_{\mathsf{Attn}}$         & 0.56      & 0.51      & 0.60      & 0.58       & 0.53       & 0.63       & 0.65        & 0.59        & 0.68     & 0.65       & 0.61       & 0.68   & 0.58      & 0.53      & 0.60            \\
                                        & $\mathcal{A}_{\mathsf{FC-Token}}$     & 0.70      & 0.65      & 0.78      & 0.75       & 0.72       & 0.80       & 0.76        & 0.74        & 0.86       & 0.77       & 0.73       & 0.84   & 0.73      & 0.72      & 0.78          \\ \hline
\multirow{3}{*}{7.5}                    & $\mathcal{A}_{\mathsf{FC-Full}}$      & 0.67      & 0.63      & 0.74      & 0.61       & 0.57       & 0.64       & 0.62        & 0.58        & 0.61       & 0.62       & 0.57       & 0.73   & 0.71      & 0.68      & 0.77          \\
                                        & $\mathcal{A}_{\mathsf{Attn}}$         & 0.50      & 0.50      & 0.50      & 0.50       & 0.50       & 0.50       & 0.60        & 0.54        & 0.60      & 0.56       & 0.54       & 0.59    & 0.50      & 0.50      & 0.50          \\
                                        & $\mathcal{A}_{\mathsf{FC-Token}}$     & 0.58      & 0.53      & 0.61      & 0.60       & 0.55       & 0.65       & 0.59        & 0.57        & 0.66       & 0.59       & 0.53       & 0.63   & 0.60      & 0.59      & 0.63          \\ \hline
\multirow{3}{*}{5}                      & $\mathcal{A}_{\mathsf{FC-Full}}$      & 0.58      & 0.55      & 0.59      & 0.57       & 0.55       & 0.62       & 0.61        & 0.57        & 0.67       & 0.67       & 0.63       & 0.74   & 0.57      & 0.54      & 0.60          \\
                                        & $\mathcal{A}_{\mathsf{Attn}}$         & 0.50      & 0.50      & 0.50      & 0.51       & 0.50       & 0.50       & 0.50        & 0.50        & 0.50      & 0.56       & 0.53       & 0.56    & 0.50      & 0.50      & 0.50          \\
                                        & $\mathcal{A}_{\mathsf{FC-Token}}$     & 0.52      & 0.49      & 0.56      & 0.54       & 0.54       & 0.56       & 0.50        & 0.50        & 0.51     & 0.51       & 0.50       & 0.50    & 0.53      & 0.53      & 0.50           \\ \hline
\end{tabular}
}
\end{table*}

\begin{table*}[ht]
\caption{Average Accuracies, F1, and AUCs of AMI attacks under dBitFlipPM defense~\citep{ding2017collecting}.} \label{table:dp_dbit}
\centering
\resizebox{\textwidth}{!}{
\begin{tabular}{ccccccccccccccccc}
\hline
\multirow{2}{*}{\textbf{$\varepsilon$}} & \multirow{2}{*}{\textbf{Method}} & \multicolumn{3}{c}{\textbf{BERT}} & \multicolumn{3}{c}{\textbf{RoBERTa}} & \multicolumn{3}{c}{\textbf{DistilBERT}} & \multicolumn{3}{c}{\textbf{GPT1}} & \multicolumn{3}{c}{\textbf{GPT2}} \\ \cline{3-17} 
                                        &                                       & ACC       & F1        & AUC       & ACC        & F1         & AUC        & ACC         & F1          & AUC         & ACC       & F1        & AUC       & ACC        & F1         & AUC       \\ \hline
\multirow{3}{*}{10}                     & $\mathcal{A}_{\mathsf{FC-Full}}$      & 0.97      & 0.97      & 1.00      & 0.98       & 0.98       & 1.00       & 0.99        & 0.99        & 1.00       & 0.98       & 0.98       & 1.00  & 0.97      & 0.96      & 1.00           \\
                                        & $\mathcal{A}_{\mathsf{Attn}}$         & 0.66      & 0.64      & 0.67      & 0.73       & 0.71       & 0.81       & 0.72        & 0.68        & 0.79      & 0.80       & 0.79       & 0.86   & 0.57      & 0.54      & 0.59           \\
                                        & $\mathcal{A}_{\mathsf{FC-Token}}$     & 0.89      & 0.87      & 0.96      & 0.93       & 0.92       & 0.96       & 0.93        & 0.93        & 0.99       & 0.93       & 0.92       & 0.96   & 0.95      & 0.95      & 0.99          \\ \hline
\multirow{3}{*}{7.5}                    & $\mathcal{A}_{\mathsf{FC-Full}}$      & 0.84      & 0.83      & 0.90      & 0.88       & 0.86       & 0.95       & 0.86        & 0.85        & 0.96       & 0.84       & 0.81       & 0.90   & 0.86      & 0.83      & 0.92          \\
                                        & $\mathcal{A}_{\mathsf{Attn}}$         & 0.59      & 0.55      & 0.61      & 0.55       & 0.51       & 0.55       & 0.57        & 0.53        & 0.56        & 0.59       & 0.57       & 0.62   & 0.51      & 0.51      & 0.51         \\
                                        & $\mathcal{A}_{\mathsf{FC-Token}}$     & 0.66      & 0.59      & 0.73      & 0.74       & 0.72       & 0.80       & 0.75        & 0.72        & 0.80      & 0.61       & 0.55       & 0.65    & 0.70      & 0.67      & 0.77          \\ \hline
\multirow{3}{*}{5}                      & $\mathcal{A}_{\mathsf{FC-Full}}$      & 0.59      & 0.57      & 0.60      & 0.62       & 0.60       & 0.67       & 0.56        & 0.51        & 0.57       & 0.63       & 0.62       & 0.65   & 0.62      & 0.61      & 0.63          \\
                                        & $\mathcal{A}_{\mathsf{Attn}}$         & 0.55      & 0.50      & 0.57      & 0.51       & 0.50       & 0.52       & 0.50        & 0.50        & 0.50       & 0.55       & 0.50       & 0.54    & 0.50      & 0.50      & 0.50         \\
                                        & $\mathcal{A}_{\mathsf{FC-Token}}$     & 0.52      & 0.50      & 0.54      & 0.61       & 0.59       & 0.63       & 0.59        & 0.58        & 0.59      & 0.55       & 0.50       & 0.55    & 0.54      & 0.50      & 0.56          \\ \hline
\end{tabular}
}
\end{table*}

\section{More Experimental Results} \label{appx:exp_more}

Due to the length constraints, the main manuscript only provides the average results of different Differential Privacy (DP) mechanisms in Table~\ref{table:dp}. In the following, we present more detailed results for our proposed membership inference attacks under each DP mechanism separately.

Specifically, we display the accuracies, F1 scores, and AUCs of the attacks for four different DP mechanisms: Generalized Randomized Response (GRR)\citep{dwork2006calibrating} in Table\ref{table:dp_grr}, Google RAPPOR~\citep{erlingsson2014rappor} in Table~\ref{table:dp_rappor}, Histogram encoding (HE)\citep{THE_USENIX} in Table\ref{table:dp_the}, and Microsoft dBitFlipPM~\citep{ding2017collecting} in Table~\ref{table:dp_dbit}. The DP mechanisms are applied at the token's index level using the Multi-Freq-{LDPy} library~\citep{Arcolezi2022}. Despite all of these mechanisms providing the same theoretical DP guarantee, there are subtle differences in the performance of the defense methods.

It is unsurprising that GRR consistently yields the highest successful inference rates given its status as one of the pioneers in the field of DP. When considering the relative performance among the other mechanisms, no clear winner emerges as the outcomes vary depending on the attacking methods and the examined models.

It is also worth noting that GPT models appear to exhibit greater resilience against attention-based attacks when compared to BERT-based models. Our hypothesis is grounded in the distinctive architectural differences between the GPTs, which feature a "decoder-only" design, and the BERTs, which employ an "encoder-only" architecture.

\begin{table}[ht]
\caption{Accuracies, F1, and AUCs at different layers ($\varepsilon=10$).} \label{table:layer_full}
\centering
\resizebox{\textwidth}{!}{
\begin{tabular}{ccccccccccccccccc}
\hline
\multirow{2}{*}{Layer} & \multirow{2}{*}{\textbf{Method}} & \multicolumn{3}{c}{\textbf{BERT}} & \multicolumn{3}{c}{\textbf{RoBERTa}} & \multicolumn{3}{c}{\textbf{DistilBERT}} & \multicolumn{3}{c}{\textbf{GPT1}} & \multicolumn{3}{c}{\textbf{GPT2}} \\ \cline{3-17} 
                               &                                       & ACC       & F1        & AUC       & ACC        & F1         & AUC        & ACC         & F1          & AUC         & ACC       & F1        & AUC       & ACC        & F1         & AUC       \\ \hline
\multirow{3}{*}{Early}         & $\mathcal{A}_{\mathsf{FC-Full}}$      & 0.98      & 0.98      & 1.00      & 0.99       & 0.99       & 1.00       & 0.99        & 0.99        & 1.00      & 0.98       & 0.97       & 1.00    & 0.98      & 0.98      & 1.00         \\
                               & $\mathcal{A}_{\mathsf{Attn}}$         & 0.86      & 0.83      & 0.92      & 0.79       & 0.78       & 0.87       & 0.91        & 0.89        & 0.96        & 0.87       & 0.86       & 0.92    & 0.50       & 0.50       & 0.50       \\
                               & $\mathcal{A}_{\mathsf{FC-Token}}$     & 0.88      & 0.87      & 0.94      & 0.88       & 0.87       & 0.92       & 0.91        & 0.90        & 0.96        & 0.93       & 0.93       & 0.97& 0.89      & 0.89      & 0.92            \\ \hline
\multirow{3}{*}{Mid}           & $\mathcal{A}_{\mathsf{FC-Full}}$      & 0.97      & 0.97      & 1.00      & 0.96       & 0.96       & 1.00       & 0.96        & 0.96        & 1.00       & 0.96       & 0.96       & 1.00 & 0.96      & 0.95      & 1.00            \\
                               & $\mathcal{A}_{\mathsf{Attn}}$         & 0.66      & 0.63      & 0.68      & 0.80       & 0.77       & 0.86       & 0.80        & 0.76        & 0.86        & 0.83       & 0.80       & 0.91    & 0.63      & 0.59      & 0.65           \\
                               & $\mathcal{A}_{\mathsf{FC-Token}}$     & 0.93      & 0.92      & 0.96      & 0.92       & 0.91       & 0.94       & 0.90        & 0.89        & 0.95        & 0.91       & 0.89       & 0.94    & 0.91      & 0.89      & 0.94            \\ \hline
\multirow{3}{*}{Late}          & $\mathcal{A}_{\mathsf{FC-Full}}$      & 0.85      & 0.83      & 0.91      & 0.94       & 0.94       & 0.98       & 0.94        & 0.93        & 0.94       & 0.94       & 0.92       & 0.97   & 0.96      & 0.95      & 0.99          \\
                               & $\mathcal{A}_{\mathsf{Attn}}$         & 0.54      & 0.50      & 0.55      & 0.68       & 0.65       & 0.71       & 0.50        & 0.50        & 0.50       & 0.73       & 0.71       & 0.75  & 0.67      & 0.61      & 0.68           \\
                               & $\mathcal{A}_{\mathsf{FC-Token}}$     & 0.80      & 0.75      & 0.86      & 0.91       & 0.90       & 0.94       & 0.90        & 0.89        & 0.95        & 0.88       & 0.85       & 0.90 & 0.93      & 0.92      & 0.95            \\ \hline
\end{tabular}
}
\end{table}

Table~\ref{table:layer_full} reports the performance of our attacks at different locations of
all five language models with $\varepsilon = 10$. This table can be considered as the full version of Table~\ref{table:layer} in our main manuscript. We can see that, in general, it is more challenging to infer the target data as the attacking surfaces get deeper. However, in BERT and RoBERTa, we observe a slight increase in the successful inference rates of $\mathcal{A}_{\mathsf{FC-Token}}$ around the middle layers of the models.

\vfill

\end{document}